\theoremstyle{definition}
\newtheorem{theorem}{Theorem}
\newtheorem{lemma}[theorem]{Lemma}
\newtheorem{definition}[theorem]{Definition}
\newtheorem{corollary}[theorem]{Corollary}
\newcommand{\N}{\mathbb{N}}
\newcommand{\cktPlus}[1]{\ensuremath{\textsc{ADD}_{{#1}}}}  
\newcommand{\cktDifference}[1]{\ensuremath{\textsc{SUB}_{{#1}}}}  
\newcommand{\cktSum}[2]{\ensuremath{\textsc{SUM}_{{#1}, {#2}}}}  
\newcommand{\cktOnes}[1]{\ensuremath{\textsc{ONES}_{{#1}}}}  
\newcommand{\cktOnesDeep}[1]{\ensuremath{\textsc{ONES'}_{{#1}}}}  
\newcommand{\cktSwitch}[1]{\ensuremath{\textsc{SWITCH}_{{#1}}}}  
\newcommand{\cktAKS}[2]{\ensuremath{\textsc{AKS}_{{#1}, {#2}}}}  
\newcommand{\cktAKSPartial}[3]{\ensuremath{\textsc{PARTIAL\_AKS}_{{#1}, {#2}, {#3}}}}  
\newcommand{\cktPSort}[3]{\ensuremath{\textsc{PIPPENGER\_SORT}_{{#1}, {#2}, {#3}}}}  
\newcommand{\cktRSort}[3]{\ensuremath{\textsc{ITERATIVE\_SORT}_{{#1}, {#2}, {#3}}}}  
\newcommand{\cktCount}[2]{\ensuremath{\textsc{COUNT}_{{#1}, {#2}}}}  
\newcommand{\cktDecompress}[2]{\ensuremath{\textsc{DECOMPRESS}_{{#1}, {#2}}}}  
\newcommand{\cktFastCount}[2]{\ensuremath{\textsc{FAST\_COUNT}_{{#1}, {#2}}}}  
\newcommand{\cktFastDecompress}[2]{\ensuremath{\textsc{FAST\_DECOMPRESS}_{{#1}, {#2}}}}  
\newcommand{\cktRoute}[1]{\ensuremath{\textsc{ROUTE}_{{#1}}}}  
\newcommand{\cktSort}[3]{\ensuremath{\textsc{SORT}_{{#1}, {#2}, {#3}}}}  
\newcommand{\fromto}[2]{\ensuremath{ \colon \left\{ 0,1 \right\}^{#1} \rightarrow \left\{ 0,1 \right\}^{#2} }}
\newcommand{\abs}[1]{\ensuremath{\left| #1 \right|}}
\title{Sorting Short Integers\footnote{Michal Koucký and Karel Král were supported by the Grant Agency of the Czech Republic under the grant agreement no. 19-27871X. Karel Král was also partially supported by the Charles University project SVV–2020–260578.}}
\author[1]{Michal Koucký}
\author[1]{Karel Král}
\affil[1]{Computer Science Institute, Charles University, Prague, Czech Republic\protect\\
\texttt{\{koucky, kralka\}@iuuk.mff.cuni.cz}}
\begin{document}

\maketitle

\begin{abstract}
We build boolean circuits of size $\mathcal{O}(nm^2)$ and depth $\mathcal{O}(\log(n) + m \log(m))$ for sorting $n$ integers each of $m$-bits. 
We build also circuits that sort $n$ integers each of $m$-bits according to their first $k$ bits that are of size $\mathcal{O}(nmk (1 + \log^*(n) - \log^*(m)))$ and depth $\mathcal{O}(\log^{3}(n))$. This improves on the results of Asharov~et~al.~\cite{asharov2021sorting} and resolves some of their open questions.
\end{abstract}

\section{Introduction}
\label{sec:introduction}

Sorting undoubtedly plays a central role in computer science.
Great many problems can be solved using sorting as a subcomponent.
There are many practical variants of sorting based either on what we sort (integers, rational numbers, strings, etc.) or how we sort (in parallel,  in distributed fashion, in external memory, etc.). Despite lots of research there are still many basic questions about sorting unanswered.

The classical comparison based sorting takes time $\mathcal{O}(n \log(n))$ when sorting $n$ integers. Well known lower bound postulates that this is optimal for comparison based sorting. However, this is a great over-simplification and the picture is much more nuanced: sorting integers from a domain of size $M$ can be done using binary search trees in time $\mathcal{O}(n \log|M|)$, thus sorting for example $m$-bit integers only needs $\mathcal{O}(n m)$ comparisons. Such an algorithm can be implemented on a pointer machine, for example. In the RAM model, with the word size $m$ we can sort even faster: 
When $m=O(\log(n))$ one can sort in time $\mathcal{O}(n)$ using radix sort, and when $m=\Omega(\log^3(n))$ one can also sort in linear time using the algorithm of Andersson~\cite{andersson1998sorting}. When $m=O(\log^3(m))$ one can sort in expected time $\mathcal{O}\left(n \sqrt{\log \frac{m}{\log(n)}}\right)$ and linear space using the algorithm of Han and Thorup~\cite{han2002sorting}. It is an easy exercise to design Turing machines that sort $m$-bit integers in time $\mathcal{O}(n m^2)$.

In many cryptographic applications there is an interest in oblivious algorithms, algorithms in which the sequence of the operations is independent of the processed data. Sorting plays an important role in construction of oblivious RAM. An oblivious comparison based parallel model of computation intended for sorting are \emph{sorting networks}. Numbers in a sorting network are thought of as signals which can only be compared.
The seminal paper by Ajtai, Komlós, and Szemerédi~\cite{aks_1983sorting} gives an asymptotically optimal sorting network of logarithmic depth and thus having $\mathcal{O}(n \log(n))$ comparators matching the comparison based lower bound.
The AKS network has immense applications in theoretical computer science, and we use it in this paper, too.

Another oblivious model of computation heavily used throughout theoretical computer science are boolean circuits. One can turn the AKS sorting network into a circuit of size $\mathcal{O}(n m \log(n))$ and depth $\mathcal{O}(\log(m) \log(n))$ (see Section~\ref{sec:sorting}). However, when building boolean circuits for sorting it is not clear whether one can take any advantage of some of the faster algorithms for RAM or Turing machines as simulating random access memory or Turing machine tapes by circuits requires substantial overhead. Asharov~et~al.~\cite{asharov2021sorting} asked the question whether one can sort $m$-bit integers in time $o(n m\log(n))$ when $m=o(\log(n))$. They provide an answer to this question by constructing circuits for sorting $m$-bit integers of size $\mathcal{O}(nm^2 (1 + \log^*(n) - \log^*(m))^{2 + \varepsilon})$ and polynomial depth, for any $\varepsilon > 0$. We improve their results: We build boolean circuits for sorting $m$-bit integers of size $\mathcal{O}(nm^2)$ and depth $\mathcal{O}(\log(n) + m \log(m))$. Pending some unexpected breakthrough this size seems optimal. The depth is provably optimal whenever $m = O(\log(n)/\log \log(n))$.

Asharov~et~al.~\cite{asharov2021sorting} solve even a more general problem as their circuits partially sort $n$ numbers each of $m$ bits by their first $k$ bits using a circuit of size $\mathcal{O}(nmk (1 + \log^*(n) - \log^*(m))^{2 + \varepsilon})$. We improve on this result as well by presenting circuits that sort $m$-bit integers according to their first $k$ bits of size $\mathcal{O}(nmk (1 + \log^*(n) - \log^*(m)))$ and depth $\mathcal{O}(\log^{3}(n))$. Our small circuits of poly-logarithmic depth answer some of the open questions of Asharov~et~al.~\cite{asharov2021sorting}.
In a work subsequent to ours, Lin and Shi~\cite{lin2021optimal} get circuits of depth $\mathcal{O}(\log(n) + \log(k))$ and size $\mathcal{O}(nkm \cdot \text{poly}(\log^*(n) - \log^*(m)))$ whenever $n > 2^{4k+7}$.
They use substantially different approach.
We state our results in the next section.

\subsection{Our Results}
\label{sec:our_results}

We provide a family of boolean circuits that sort $m$-bit strings. Our circuits are smaller than the circuits directly derived from the AKS sorting network, and they improve on the result of Asharov~et~al.~\cite{asharov2021sorting}. Our circuits achieve optimal logarithmic depth whenever $m \log(m) \leq \log(n)$. Pending some unexpected breakthrough, their size seems also optimal.

\begin{theorem}
     For any integers $n,m\ge 1$ there is a size $\mathcal{O}(nm^2)$ and depth $\mathcal{O}(\log(n) + m \log(m))$ circuit
that sorts $n$ integers of $m$ bits each.
	\label{thm:main}
\end{theorem}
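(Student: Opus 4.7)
The plan is to split on the size of $m$ relative to $\log(n)$ and handle each regime with a different tool.

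When $m \ge \log(n)$, a straightforward implementation of the AKS sorting network on $n$ keys of $m$ bits already works: AKS has depth $\mathcal{O}(\log(n))$ with $\mathcal{O}(n \log(n))$ comparators, and realising each comparator as an $\mathcal{O}(m)$-size, $\mathcal{O}(\log(m))$-depth subcircuit on two $m$-bit inputs gives a circuit of size $\mathcal{O}(nm \log(n))$ and depth $\mathcal{O}(\log(n) \log(m))$. Since $\log(n) \le m$ in this regime, the size bound is $\mathcal{O}(nm^2)$ and the depth bound is $\mathcal{O}(m \log(m))$, both within the target.

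When $m < \log(n)$, the size budget $\mathcal{O}(nm^2)$ is asymptotically smaller than the $\mathcal{O}(nm \log(n))$ produced by naive AKS, while the depth budget simplifies to roughly $\mathcal{O}(\log(n))$. I would exploit a packing idea: partition the $n$ short keys into $N = \Theta(nm/\log(n))$ blocks of $\Theta(\log(n)/m)$ keys each, so that each block fits into an $\mathcal{O}(\log(n))$-bit word. On this reduced instance an AKS-style sorter on $N$ packed elements, with each comparator operating on the $\mathcal{O}(\log(n))$-bit labels as a whole, would contribute size $\mathcal{O}(N \log(N) \cdot \log(n)) = \mathcal{O}(nm \log(n))$; converting this into $\mathcal{O}(nm^2)$ requires further savings, which one can hope to obtain by keeping the intra-block sorting small (only $\mathrm{poly}(\log(n)/m)$ gates per block) and by reusing the paper's partial-sort or routing primitives for the final stitching of locally sorted blocks into global order.

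The main obstacle will be controlling depth and size simultaneously. A naive LSD radix sort with $m$ sequential rounds of a linear-size stable-partition circuit achieves size $\mathcal{O}(nm^2)$ but has depth $\Theta(m \log(n))$, which exceeds the allowed $\mathcal{O}(\log(n) + m \log(m))$ whenever $m < \log(n)/\log(m)$. Conversely, directly applying the paper's partial-sort circuit with $k = m$ incurs a $(1+\log^*(n) - \log^*(m))$ factor in size and $\mathcal{O}(\log^3(n))$ depth, both over budget. I expect the trickiest step to be organising the routing between packed and unpacked representations so that the stitching step fits into $\mathcal{O}(\log(n) + m \log(m))$ depth while keeping the total gate count at $\mathcal{O}(nm^2)$, especially near the boundary $m \approx \log(n)$ where the two regimes meet.
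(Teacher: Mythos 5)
Your case split on $m$ versus $\log(n)$ and the treatment of the regime $m \ge \log(n)$ are fine: applying the AKS network directly with $\Theta(m)$-size, $\Theta(\log m)$-depth comparators gives size $\mathcal{O}(nm\log(n)) \le \mathcal{O}(nm^2)$ and depth $\mathcal{O}(\log(n)\log(m)) \le \mathcal{O}(m\log(m))$, exactly as you say, and this matches how the paper dispatches that regime.

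For $m < \log(n)$, however, your proposal is not yet a proof. You propose packing $\Theta(\log(n)/m)$ keys into a $\Theta(\log n)$-bit ``super-key'' and running AKS on the $N=\Theta(nm/\log n)$ packed words, but as you yourself compute, this still yields size $\Theta(nm\log n)$, and you then gesture at unspecified ``further savings'' and ``stitching'' without a mechanism. The essential missing idea is \emph{compression by counting}: when $m \ll \log n$ there are only $2^m \ll n$ distinct values, so the multiset of inputs is fully described by a vector of $2^m$ counts of $\lceil 1+\log n\rceil$ bits each, and the sorted output is just the unary decompression of that vector. The paper builds a $\cktFastCount{n}{m}$ circuit of size $\mathcal{O}(nm^2)$ and depth $\mathcal{O}(\log n + m\log m)$ and a matching $\cktFastDecompress{n}{m}$, and the theorem is their composition. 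The counting circuit itself uses AKS, but on blocks of $2^{\Theta(m)}$ inputs (not on $\Theta(\log(n)/m)$-tuples), so the $\log$ overhead of AKS becomes an acceptable $\mathcal{O}(m)$ factor; after sorting a block of $2^{8m}$ keys and cutting it into $2^{6m}$ parts of $2^{2m}$ keys, at most $2^m$ parts are non-monochromatic, so the parts can be shuffled by another small AKS sort and the cheap (Lemma~\ref{lem:count}) counter only needs to touch one representative per monochromatic part plus the $\mathcal{O}(n/2^{5m})$ entries living in potentially mixed parts. None of this ``sort, bucket into monochromatic versus mixed parts, count cheaply'' structure is present in your sketch, and without it your size and depth accounting for the small-$m$ regime do not close; this is the gap.
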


For $m\ge \Omega(\log(n))$, the existence of such a circuit directly follows from AKS sorting networks. Our contribution is the construction of such circuits for $m \le o(\log(n))$. Our construction also uses a sorting network as a building block.
We use the AKS sorting network as one of our primitives but in principle, we could use any sorting network or sorting circuit.
In particular, we could use any circuit sorting $n$ numbers of $\log(n)$ bits each  in our construction.
Any improvement of asymptotic complexity of sorting of $\log(n)$-bit numbers would give us improved complexity of sorting short numbers.

The main idea behind our construction is to \emph{compress} the input by computing the number of occurrences of each $m$-bit integer. 
This gives a vector of $2^m$ integers, each of size $\mathcal{O}(\log(n))$. Decompressing this vector back gives the sorted input.
Combining the counting and decompressing circuit gives us a circuit that sorts.
The main technical lemma is our counting circuit which is of independent interest. 

\begin{lemma}
        For any integers $n,m\ge 1$ where $m \leq \log(n) / 10$ there is a circuit $$\cktFastCount{n}{m} \fromto{n \cdot m}{\lceil 1+\log(n) \rceil 2^{m}}$$ which given a sequence of $n$ strings of $m$ bits each outputs the number of occurrences of each possible $m$-bit string among the inputs,
that is for input $x_1, x_2, \ldots, x_n \in \left\{ 0,1 \right\}^{m}$ it outputs $n_{0^m}, n_{0^{m-1}1}, \ldots, n_{1^m}$  where for each string $y \in \left\{ 0,1 \right\}^{m}$, $n_y\in \left\{ 0,1 \right\}^{\lceil 1+\log(n) \rceil}$ represents $\abs{ \left\{ j \in [n] \mid x_j = y \right\} }$ in binary.
	The size of the circuit \cktFastCount{n}{m} is $\mathcal{O}(nm^2)$ and depth $\mathcal{O}(\log(n) + m \log(m))$.
	\label{lemma:fast_count}
\end{lemma}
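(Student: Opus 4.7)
The plan is to partition the $n$ inputs into $n/2^m$ blocks of size $B = 2^m$ (the hypothesis $m \le \log(n)/10$ ensures $B \le n^{1/10}$), produce a length-$2^m$ count vector per block, and finally add these vectors entry-wise across blocks. To keep each per-block cost at $O(Bm^2) = O(m^2 \cdot 2^m)$ I first apply an AKS sorting network to the block: with standard $m$-bit comparators, each block sort uses $O(Bm \log B) = O(m^2 \cdot 2^m)$ gates and depth $O(\log B \cdot \log m) = O(m \log m)$, so all $n/B$ parallel block sorts together cost $O(nm^2)$ size and $O(m \log m)$ depth.

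\emph{From sorted block to dense count vector.} In a sorted block $y_1 \le \cdots \le y_B$, I flag run-starts by $s_i = [y_i \ne y_{i-1}]$ (with convention $y_0 = -1$) and, via a parallel prefix scan, compute for each run-start $i$ the length $\ell_i$ of the run beginning at $i$. This yields at most $2^m$ pairs $(y_i, \ell_i)$ of $O(m)$ bits, already sorted by key. To scatter them into a dense $2^m$-vector (with zeros in empty slots), I merge this sparse sorted list with the $2^m$ position markers $0, 1, \ldots, 2^m-1$ using an AKS sort keyed on value (equivalently, route them through a Bene\v{s}-style network). Each block thus uses $O(m^2 \cdot 2^m)$ gates and $O(m \log m)$ depth, totalling $O(nm^2)$ size and $O(m \log m)$ depth across blocks.

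\emph{Summing across blocks.} Each final count $n_v$ is the sum of $n/2^m$ per-block counts of $m$ bits, producing at most $\lceil 1 + \log(n)\rceil$ bits. A multi-operand adder (3-to-2 Wallace-tree reduction followed by a carry-propagate adder) handles each of the $2^m$ coordinates in size $O((n/2^m)\cdot m)$ and depth $O(\log n)$, for $O(nm)$ total size and $O(\log n)$ depth. Adding the three pieces yields $O(nm^2)$ size and $O(\log n + m \log m)$ depth, as claimed.

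I expect the main obstacle to be the per-block sparse-to-dense scatter. A naïve pairwise check comparing every candidate value against every run-start would cost $\Theta(m \cdot 4^m)$ per block and blow the budget; the crucial observation is that the sparse list is already sorted by key, so routing it to the $2^m$ indexed output slots via an AKS merge with the position markers (or an explicit routing network) fits in $O(m^2 \cdot 2^m)$ gates. Everything else -- the AKS block sort, the parallel prefix for run lengths, and the multi-operand final addition -- is then standard and comfortably meets the budget.
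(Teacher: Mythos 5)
Your proposal is correct in outline but takes a genuinely different route from the paper's. Both constructions exploit the same governing idea — sort blocks of size $2^{\Theta(m)}$ with an AKS circuit so the $\mathcal{O}(\log(\text{block size}))$ overhead becomes $\mathcal{O}(m)$ rather than $\mathcal{O}(\log n)$ — but what happens inside a block is different. The paper uses blocks of size $2^{8m}$, subdivides each sorted block into $2^{6m}$ parts of size $2^{2m}$, observes that at most $2^m$ parts can be non-monochromatic, partially sorts the parts by a one-bit ``mixed'' flag to push the non-monochromatic ones to the front, and then counts the monochromatic representatives and the non-monochromatic residue separately with the naïve \cktCount{}{} circuit before adding two $2^m$-long vectors. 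You instead use blocks of size exactly $2^m$, run-length encode each sorted block via a parallel prefix/suffix scan, scatter the $(value,length)$ records into a dense $2^m$-vector, and then sum $n/2^m$ such vectors with a Wallace tree. Your version avoids the monochromatic/mixed bookkeeping entirely and, as a bonus, does not actually need $m \le \log(n)/10$ but only $m \le \log n$; the paper's version avoids any per-block sparse-to-dense scatter by instead controlling the \emph{number} of expensive items to count.

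The one place where your write-up is underspecified is precisely the scatter step, and it deserves more care than ``merge with the $2^m$ position markers.'' After a single AKS merge of the (padded) $(y_i,\ell_i)$ records with the markers $0,\dots,2^m-1$, the marker for value $v$ sits at a data-dependent position — roughly $v$ plus the number of distinct values $< v$ present in the block — so the dense vector cannot be read off by wiring fixed output slots to fixed merged positions. Also, the ``sparse sorted list'' is not literally a sorted list of length $\le 2^m$ in circuit terms: it is a length-$2^m$ array with holes at non-run-start positions, and whether it is sorted depends on how you fill the holes. The standard fix is a two-pass scatter: (i) AKS-sort the $2^{m+1}$ records (markers plus padded run-records) by $(\text{value}, \text{type})$ so each marker is adjacent to the run-record, if any, of the same value, and transfer the count onto the marker by a local adjacent-position comparison; (ii) AKS-sort again by $(\text{is-run-record}, \text{value})$ so the $2^m$ markers land in positions $1,\dots,2^m$ in value order carrying the counts. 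Both sorts are on $2^{m+1}$ records of $\mathcal{O}(m)$ bits, so each block still costs $\mathcal{O}(m^2 2^m)$ gates and $\mathcal{O}(m\log m)$ depth, and the global budget of $\mathcal{O}(nm^2)$ size and $\mathcal{O}(\log n + m\log m)$ depth is preserved. With this patch your argument goes through.
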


We also provide a family of boolean circuits which sort the input integers by their first $k$ bits only. One can view this as sorting (key, value) pairs, where keys have $k$ bits and values have $m-k$ bits.
For the special case of $k=1$ (that is partially sorting the numbers by a single bit) the problem is equivalent to routing in \emph{super-concentrators} (see Section~\ref{sec:technique}), and we use super-concentrators of Pippenger~\cite{pippenger1996self} as our building block.
We get size improvement over the result of Asharov~et~al.~\cite{asharov2021sorting} while achieving also poly-logarithmic depth.

\begin{theorem}
	For any integers $n,m,k\ge 1$ where $k \leq m$ and $k \leq \log(n) / 11$ there is a circuit
	$$\cktSort{n}{m}{k} \fromto{nm}{nm}$$
	which partially sorts $n$ numbers each of $m$ bits by their first $k$ bits.
	The circuit \cktSort{n}{m}{k} has size $\mathcal{O}(k n m (1 + \log^*(n) - \log^*(m)))$ and depth $\mathcal{O}(\log^3(n))$.
	\label{thm:sort_by_k_bits}
\end{theorem}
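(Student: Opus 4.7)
My plan is to realise the partial sort by the top $k$ bits as $k$ sequential stable ``sort by one bit'' circuits, applied from the least significant to the most significant of the $k$ key bits. The standard radix-sort argument (stability of each split composes to a stable sort by the whole $k$-bit key) reduces the problem to designing a good one-bit split circuit $\cktSplit{n}{m}$. If such a split circuit has size $S_1 = \mathcal{O}(nm(1 + \log^*(n) - \log^*(m)))$ and depth $D_1 = \mathcal{O}(\log^2(n))$, then chaining $k$ copies gives total size $kS_1$ matching the stated bound, and total depth $kD_1 = \mathcal{O}(k \log^2(n)) = \mathcal{O}(\log^3(n))$ using the hypothesis $k \leq \log(n)/11$.

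The one-bit split circuit is the heart of the argument. Given $n$ records of $m$ bits each together with a distinguished bit $b_i$ inside each record, it must place all records with $b_i = 0$ to the left and those with $b_i = 1$ to the right in a stable way. This is a partial permutation of the $n$ input positions to the $n$ output positions that is fully determined by the bits $b_1, \ldots, b_n$, and it is exactly the routing task that Pippenger's self-routing super-concentrators~\cite{pippenger1996self} were designed to solve on a single-bit payload. Broadcasting the switch settings across the $m$ payload wires turns this into a circuit whose size is roughly $m$ times the size of a Pippenger super-concentrator on $n$ inputs.

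To obtain the sharper $(1 + \log^*(n) - \log^*(m))$ factor I would not route individual records through the super-concentrator. Instead I would pack $\Theta(m)$ consecutive records into a block and route only $\Theta(n/m)$ blocks through a self-routing super-concentrator, saving $\log^*(m)$ levels of Pippenger's $\log^*$-deep recursion by starting the recursion at block size $\Theta(m)$. The residual in-block splits on $\Theta(m)$ records of width $m$ can be done by naive sorting networks; they contribute $\mathcal{O}(m^2 \log m)$ gates per block times $\Theta(n/m)$ blocks, which is absorbed into the $\mathcal{O}(nm(1 + \log^*(n) - \log^*(m)))$ budget.

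The main obstacle I expect is translating Pippenger's self-routing super-concentrators into an explicit boolean circuit of depth $\mathcal{O}(\log^2(n))$: one has to verify that the routing logic at each of the $\mathcal{O}(\log^* n)$ recursive levels executes in depth $\mathcal{O}(\log(n))$, so that the total depth of a single split stays within $\mathcal{O}(\log^2(n))$, and that the control signals produced at one level can be consumed in parallel by all routing gates at the next level. Once the split primitive is in place with the stated size and depth, the radix-sort chain and the standard stability argument immediately give Theorem~\ref{thm:sort_by_k_bits}.
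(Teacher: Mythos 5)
Your plan hinges on chaining $k$ \emph{stable} one-bit splits from least-significant to most-significant key bit, but the stability claim is exactly where the argument breaks. A super-concentrator guarantees only that for any equal-sized $S\subseteq A$ and $T\subseteq B$ there exist \emph{some} vertex-disjoint paths from $S$ to $T$; it does not let you prescribe the bijection between $S$ and $T$, and Pippenger's routing circuit in particular produces an arbitrary matching, not the order-preserving one. Routing an arbitrary monotone matching is the job of a rearrangeable (permutation) network, which needs $\Omega(n\log n)$ edges and would blow your budget. So the one-bit split built from $\cktRoute{n}$ satisfies Definition~\ref{def:partial_sorting_circuit} but is not stable, and without stability the LSB-first radix sort simply does not compute a sort by the first $k$ bits. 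This is why the paper cannot (and does not) use a radix chain: Corollary~\ref{col:pippenger_sort} and Lemma~\ref{lem:route} are stated as unstable partial sorts, and Theorem~\ref{thm:sort_by_k_bits} is instead proved by an MSD-style divide and conquer. The paper first sorts just the $k$-bit keys with the $\mathcal{O}(nk^2)$ circuit of Theorem~\ref{thm:main}, reads off the median key, splits the array into elements below/equal/above the median using four applications of $\cktRSort{n}{m+1}{1}$, and recurses on two halves of size $n/2$, one of which has one fewer varying key bit. The recurrence $S_{m,k}(n)\le L_m(n)+S_{m,k-1}(n/2)+S_{m,k}(n/2)$ then yields the stated size and $\mathcal{O}(\log^3 n)$ depth without ever needing stability.

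A secondary, smaller issue: your proposed sharpening of the one-bit split (pack $\Theta(m)$ records per block, naive sorting networks inside blocks) does not reproduce the $(1+\log^*(n)-\log^*(m))$ factor. Sorting a block of $\Theta(m)$ records of width $m$ by a sorting network already costs $\Theta(m^2\log m)$ gates per block and hence $\Theta(nm\log m)$ overall, which is not absorbed by the target. Moreover with blocks of size $\Theta(m)$ there is no pigeonhole guarantee that most blocks become monochromatic. The paper's Lemma~\ref{lem:route} instead iterates with parts of size $m_i$ and blocks of size $n_i=2^{3m_i}$, growing $m_{i+1}=2^{m_i}$; the exponential block size is what makes the fraction of mixed parts $O(1/m_i^3)$ and keeps each of the $\log^*(n)-\log^*(m)$ iterations at cost $\mathcal{O}(nm)$. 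If you want to salvage the radix-sort framing, you would either have to prove that Pippenger's specific routing can be made order-preserving (a nontrivial claim not made in the literature cited), or switch to the MSD/median decomposition, at which point you have essentially the paper's proof.
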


\subsection{Our Techniques}
\label{sec:technique}

One can take AKS sorting networks and turn them into circuits of size  $\mathcal{O}(n m \log(n))$ and depth $\mathcal{O}(\log(m) \log(n))$.
For $m=o(\log(n))$ this is sub-optimal as shown by Asharov~et~al.~\cite{asharov2021sorting}. Asharov~et~al. show how to reduce the problem of sorting $m$-bit integers according to the first $k$ bits into the problem of sorting $m$-bit integers according to just single bit. Sorting according to single bit is essentially equivalent to routing in super-concentrators.

Super-concentrators have been studied originally by Valiant with the aim of proving circuit lower bounds.
A super-concentrator is a graph with two disjoint subsets of vertices $A, B \subseteq V(G)$, called inputs and outputs, with the property that for any set $S \subseteq A$ and $T \subseteq B$ of the same size there is a set of vertex disjoint paths from each vertex of $S$ to some vertex of $T$.
Pippenger~\cite{pippenger1996self} constructs super-concentrators with a linear number of edges and an algorithm that on input describing $S$ and $T$ outputs the list of edges forming the disjoint paths between $S$ and $T$. This can be turned into a circuit of size $\mathcal{O}(n \log(n))$ and depth~$\mathcal{O}(\log^2(n))$.

The result of Pippenger~\cite{pippenger1996self} can be used to build a circuit sorting by one bit, but the circuit will be larger than we want (see Corollary~\ref{col:pippenger_sort}.)
Thus, Asharov~et~al.~\cite{asharov2021sorting} used the technique of Pippenger rather than his result to design a circuit sorting by one bit, and iterate it to sort by $k$ bits.
Our technique differs substantially from that of Asharov~et~al. yet, we use the circuits from AKS networks and from Pippenger's super-concentrators as black box. 

To sort $m$-bit integers for $2^m \ll n$ our approach is to count the number of occurrences of each number in the input. This compresses the input from $n m$ bits into $2^m \log(n)$ bits. We can then decompress the vector back to get the desired output. So the main challenge is to construct counting (compressing) circuits of size $\mathcal{O}(nm^2)$. Interestingly, we use the sorting circuits derived from AKS networks to do that. But to avoid the size blow-up we don't use them on all of the integers at once but on blocks of integers of size $2^{8m}$. Then the $\mathcal{O}(\log(n))$ overhead of the circuits turns into the acceptable $\mathcal{O}(m)$ overhead. Each sorted block is then subdivided into parts of size $2^{2m}$. Clearly, most parts in each block will be monochromatic, they will contain copies of the same integer. There will be at most $2^m$ non-monochromatic parts. We \emph{move} the parts 
within a block to one side using another application of the AKS sorting circuit. Then we can afford to build a fairly expensive counting circuit for the small fraction of non-monochromatic parts, while cheaply counting the monochromatic parts. Summing the results by linear size circuit gives us the desired compression. Our decompression essentially mirrors the compression.

We also design a circuit to sort according to a single bit improving the parameters of Asharov~et~al.~\cite{asharov2021sorting}. We take the circuit of Pippenger as basis and apply it iteratively to larger and larger blocks of inputs. Again we start from blocks of size $2^{O(m)}$, and increase the size of the blocks exponentially at each iteration. We use Pippenger's circuit to sort each block by the bit. When we split the block into parts, only one will be monochromatic. Merging multiple blocks into one gives a mega-block with only a small fraction of non-monochromatic parts. These non-monochromatic parts can be separated from monochromatic ones, re-sorted, and re-partitioned to give only one non-monochromatic part in the mega-block. Each part takes on the role of an ``$m$''-bit integer in the next iteration. Iterating this process leads to the desired result.

To sort according to the first $k$ bits we use the one-bit sorting similarly to Asharov~et~al.~\cite{asharov2021sorting}. Thanks to our efficient sorting circuits for $m$-bit integers to sort the $k$-bit keys, we can avoid the use of median finding circuits.

\paragraph*{Organization.}
In the next section we review our notation.
We provide basic construction tools including na\"{\i}ve constructions of counting and decompression circuits in Section~\ref{sec:preliminaries}.
In Section~\ref{sec:sorting} we recall basic facts on AKS sorting networks and related sorting circuits.
In Section~\ref{sec:sorting_strings} we prove our main result by constructing efficient counting and decompression circuits.
Finally, we provide a construction of partial sorting circuits for Theorem~\ref{thm:sort_by_k_bits} in Section~\ref{sec:sorting_with_payloads}.

\section{Notation}
\label{sec:notation}

In this paper $\N$ denotes the set of natural numbers, and for $1\le a\le b \in \N$, $[a,b]=\{a,a+1,\dots,b\}$ and $[a]=\{1,\dots,a\}$.
All logarithms are base two unless stated otherwise.
For $m \in \N$, $\left\{ 0,1 \right\}^{m}$ is the set of all binary strings of length $m$.
A string $x \in \left\{ 0,1 \right\}^{m}$, $x=x_1x_2\cdots x_m$, represents the number $\sum_{j \in [m]} x_j 2^{m-j}$ in binary, and we often identify
the string with that number. (As the same integer has multiple binary representations differing in the number of leading zeroes, the number of leading zeroes should be clear from the context.) 
The most significant bit of $x=x_1x_2\cdots x_m$ is $x_1$ and the least significant bit of $x$ is $x_m$.
Symbol $\circ$ denotes the concatenation of two strings.
For strings $x, y \in \left\{ 0,1 \right\}^{m}$, $x \oplus y$ denotes the bit-wise XOR of $x$ and $y$, $x \wedge y$ denotes the bit-wise AND, and $x \vee y$ the bit-wise OR.

We assume the reader is familiar with boolean circuits (see for instance the book of Jukna~\cite{jukna2012boolean}). We assume boolean circuits consist of gates computing binary AND and OR, and unary gates computing negation. For us, boolean circuits might have multiple outputs so a circuit with $n$ inputs and $m$ outputs computes a function $f\fromto{n}{m}$.
We usually index a circuit family by multiple integral parameters. 
Inputs and outputs of boolean circuits are often interpreted as sequences of substrings, e.g., a circuit $C_{n, m} \fromto{nm}{nm}$ 
is viewed as taking $n$ binary strings of length $m$ as its input, and similarly for its output.
We say a circuit family $(C_n)_{n\in\N}$ is uniform, if there is an algorithm that on input $1^n$ outputs the description of the circuit $C_n$ in time polynomial in $n$.

\section{Preliminaries}
\label{sec:preliminaries}

Here we review some of the circuits for basic primitives that we will use in our later constructions. Most of them are well known facts but for the others we provide proofs for the sake of completeness.

\newcommand{\cktPlusSizeDepth}[1]{\cktPlus{{#1}} has size $\mathcal{O}\left( {#1} \right)$ and depth $\mathcal{O}\left( \log\left({#1}\right) \right)$}
\begin{lemma}[Addition]
	There is a uniform family of boolean circuits $\cktPlus{m} \fromto{2m}{m+1}$
	that given $x, y \in \left\{ 0,1 \right\}^{m}$ representing two numbers in binary outputs their sum $x+y \in \left\{ 0,1 \right\}^{m+1}$.
	The circuit \cktPlus{m} has size $\Theta\left( m \right)$ and depth $\Theta( \log(m) )$.
	\label{lem:sum_two_numbers}
\end{lemma}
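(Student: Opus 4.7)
The plan is the standard carry-lookahead adder built from a parallel prefix network. Writing $x=x_1\cdots x_m$ and $y=y_1\cdots y_m$ (with $x_m,y_m$ least significant), let $c_i$ denote the carry out of position $i$, with $c_{m+1}=0$. The sum bit at position $i$ is then $s_i = x_i \oplus y_i \oplus c_{i+1}$ (a constant-size, constant-depth gadget per bit), while the leading bit of the output is $c_1$. So the whole problem reduces to computing all $m+1$ carries $c_1,\dots,c_{m+1}$ with a circuit of size $\mathcal{O}(m)$ and depth $\mathcal{O}(\log m)$.

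To do that, I would introduce per-position \emph{generate} and \emph{propagate} bits $g_i = x_i \wedge y_i$ and $p_i = x_i \oplus y_i$, all computable in size $\mathcal{O}(m)$ and depth $\mathcal{O}(1)$. The carry recurrence $c_i = g_i \vee (p_i \wedge c_{i+1})$ is a prefix computation under the associative binary operator $(g,p)\star(g',p') = \bigl(g \vee (p \wedge g'),\; p \wedge p'\bigr)$ acting on pairs of bits: if $(G_i,P_i) = (g_i,p_i)\star(g_{i+1},p_{i+1})\star\cdots\star(g_m,p_m)$, then $c_i = G_i$. Associativity is a short check, and each application of $\star$ costs $\mathcal{O}(1)$ gates and $\mathcal{O}(1)$ depth.

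Hence the task becomes: given $m$ pairs of bits, compute all suffix $\star$-products. For this I would plug in a standard parallel prefix network (for concreteness, Brent--Kung), which uses $\mathcal{O}(m)$ copies of the combining operator and has depth $\mathcal{O}(\log m)$. Composed with the $\mathcal{O}(m)$-gate, $\mathcal{O}(1)$-depth layers that produce the $(g_i,p_i)$ and then the $s_i$ from the $c_i$, this yields a circuit of size $\mathcal{O}(m)$ and depth $\mathcal{O}(\log m)$, and uniformity is immediate since Brent--Kung has a uniform description. Matching lower bounds are easy: the circuit has $m+1$ output bits, forcing size $\Omega(m)$; and since the leading output bit depends on all $2m$ inputs (flipping the least significant input bit can change it), with fan-in $2$ gates the depth must be at least $\log_2(2m) = \Omega(\log m)$.

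The only mildly delicate point is choosing a prefix network that is simultaneously linear in size and logarithmic in depth; naive constructions tend to give one or the other (e.g.\ a binary tree of partial products is $\mathcal{O}(m)$ size but $\Theta(\log m)$ depth per carry and requires repetition to extract all prefixes, while Sklansky's network gives $\mathcal{O}(\log m)$ depth but $\Theta(m\log m)$ size). I would therefore explicitly cite Brent--Kung (or Ladner--Fischer) as the black-box prefix construction that achieves both bounds simultaneously; everything else in the proof is routine.
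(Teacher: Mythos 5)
The paper does not actually give a proof of Lemma~\ref{lem:sum_two_numbers}; it is listed among the ``well known facts'' in Section~\ref{sec:preliminaries} and stated without argument. Your carry-lookahead construction via a parallel prefix network (Brent--Kung or Ladner--Fischer) over the generate/propagate operator is precisely the standard argument one would cite for this claim, and it is correct, including the associativity of the $\star$ operator, the identification $c_i = G_i$ given $c_{m+1}=0$, and the two matching lower bounds (the $m+1$ distinct non-trivial outputs force $\Omega(m)$ gates, and the carry-out $c_1$ depending on all $2m$ inputs forces depth $\geq \lceil \log_2(2m)\rceil$ with fan-in~$2$). Your closing caution about choosing a prefix network that is simultaneously linear-size and log-depth is well placed; that is indeed the only non-routine design choice here.
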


\newcommand{\cktDifferenceSizeDepth}[1]{\cktDifference{ {#1} } has size $\Theta\left( {#1} \right)$ and depth $\Theta\left( \log\left( {#1} \right) \right)$}
\begin{lemma}[Subtraction]
	There is a uniform family of boolean circuits $\cktDifference{m} \fromto{2m}{m}$
	that given $x, y \in \left\{ 0,1 \right\}^{m}$ representing two numbers in binary outputs the absolute value of their difference $|x-y| \in \left\{ 0,1 \right\}^{m}$.
	The circuit \cktDifferenceSizeDepth{m}.
	\label{lem:dif_two_numbers}
\end{lemma}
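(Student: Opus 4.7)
The plan is to reduce subtraction to a single addition from Lemma~\ref{lem:sum_two_numbers}, together with constant-depth bitwise operations and one final increment, using two's-complement arithmetic so that no explicit comparator is needed. All ingredients will have size $O(m)$ and depth $O(\log m)$, and the matching $\Omega(m)$ size and $\Omega(\log m)$ depth lower bounds follow from standard fan-in considerations (every input must be read, and the output must be a function of all $2m$ inputs in the worst case).

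Concretely, I would first compute $t = x + \bar y + 1 \in \{0,1\}^{m+1}$, where $\bar y$ denotes the bitwise complement of $y$. This can be produced either by feeding $(x, \bar y)$ to \cktPlus{m} and then incrementing the $(m{+}1)$-bit result, or by the standard modification of a prefix-carry adder that accepts a carry-in; either option has size $O(m)$ and depth $O(\log m)$ by Lemma~\ref{lem:sum_two_numbers}. Writing $t = c \circ r$ with $c \in \{0,1\}$ the carry-out bit and $r \in \{0,1\}^m$, a direct calculation using $\bar y = 2^m - 1 - y$ gives $t = (x - y) + 2^m$. Hence $c = 1$ iff $x \ge y$, in which case $r = x - y$, while if $c = 0$ then $x < y$ and $r = 2^m - (y - x)$.

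It therefore suffices to conditionally negate $r$ whenever $c = 0$. I would XOR each bit of $r$ with the single bit $\bar c$ (one parallel layer of $m$ gates of constant depth) and then add $\bar c$ as a carry-in using an $m$-bit increment circuit of size $O(m)$ and depth $O(\log m)$ (for instance by invoking \cktPlus{m} on the XORed string and $0^{m-1}\circ \bar c$ and discarding the top output bit). The result equals $\bar r + 1 \pmod{2^m} = y - x$ when $c = 0$ and equals $r = x - y$ when $c = 1$, matching $|x - y|$ in both cases. Summing up, the total circuit has size $O(m)$ and depth $O(\log m)$.

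I do not anticipate any real obstacle; every ingredient is textbook. The only mildly delicate point is cleanly threading the initial carry-in ``$+1$'' through the adder, which is routine once one commits to a specific prefix-adder construction (e.g.\ Sklansky, Brent--Kung, or Ladner--Fischer), and which in any case only changes the construction by $O(m)$ extra gates and $O(\log m)$ extra depth if one prefers to treat \cktPlus{m} as an opaque black box.
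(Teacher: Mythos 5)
The paper states this lemma without proof, grouping it with the ``well known facts'' of Section~\ref{sec:preliminaries} that are left unargued, so there is no in-paper argument to compare against. Your two's-complement construction is correct and exactly the kind of textbook argument the paper presumes: from $\bar y = 2^m - 1 - y$ one gets $t = x + \bar y + 1 = (x-y) + 2^m$, so the carry-out $c$ records whether $x \ge y$ and the low $m$ bits $r$ equal $x-y$ (if $c=1$) or $2^m-(y-x)$ (if $c=0$); the conditional complement-and-increment $r \oplus \bar c^m$ followed by adding $\bar c$ yields $|x-y|$ in both cases, and every stage is a constant number of \cktPlus{m} invocations plus an $O(m)$-gate, constant-depth bitwise layer, giving $O(m)$ size and $O(\log m)$ depth overall. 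The one place worth sharpening is the lower-bound direction of the $\Theta(\cdot)$ claims: ``the output must be a function of all $2m$ inputs'' gives the size bound but not, by itself, the depth bound, since depth is a per-output-bit quantity. What you want is that some \emph{single} output bit depends on all $2m$ inputs --- for instance the most significant bit of $|x-y|$, which via carry propagation can be flipped by toggling any one input bit with the others suitably fixed --- and then bounded fan-in forces both $\Omega(m)$ size (a binary tree feeding that bit has $\ge 2m-1$ gates) and $\Omega(\log m)$ depth. With that one sentence added, the proof is complete.
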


\newcommand{\cktSumSizeDepthParenthesis}[2]{\cktSum{ {#1} }{ {#2} } has size $\Theta\left( \left( {#1} \right) \left( {#2} \right) \right)$ and depth $\Theta\left( \log\left( {#1} \right) + \log\left( {#2} \right) \right)$}
\newcommand{\cktSumSizeDepthTODO}[2]{\cktSum{ {#1} }{ {#2} } \todo[inline,color=green!40]{\cktSum{ {#1} }{ {#2} } size $\Theta\left( \left( {#1} \right) \left( {#2} \right) \right)$, depth $\Theta\left( \log\left( {#1} \right) + \log\left( {#2} \right) \right)$}}
\begin{lemma}[Summation]
	There is a uniform family of boolean circuits
	$$\cktSum{n}{m} \fromto{n \cdot m}{\lceil \log(n) \rceil + m}$$
	that given $x_1, x_2, \ldots, x_n \in \left\{ 0,1 \right\}^{m}$ interpreted as $n$ numbers, each of $m$ bits, outputs their sum $\sum_{j = 1}^{n} x_j$.
	The circuit \cktSum{n}{m} has size $\Theta(nm)$ and depth $\Theta\left( \log(n) + \log(m) \right)$.
	\label{lem:sum_n_numbers}
\end{lemma}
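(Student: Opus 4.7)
The plan is to use the classical Wallace-tree (carry-save) reduction followed by a single carry-propagate addition via Lemma~\ref{lem:sum_two_numbers}. First, I view the input as an $n \times m$ table of bits where bit $x_{i,j}$ carries weight $2^{m-j}$, and group bits of equal weight into \emph{columns}, giving $m + \lceil \log n \rceil$ columns each initially holding at most $n$ bits. Second, I iteratively apply layers of $3$-to-$2$ compressors (full adders): one layer partitions the bits in each column into triples and replaces each triple $(a,b,c)$ by the sum bit $a \oplus b \oplus c$ (kept in the same column) and the carry bit $\mathrm{maj}(a,b,c)$ (pushed to the next higher-weight column). A full adder uses $O(1)$ gates of constant depth, and the standard Wallace analysis shows that after $O(\log n)$ layers every column contains at most two bits, so the sum has been reduced to two numbers of length at most $m + \lceil \log n \rceil$. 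Third, I add these two numbers by invoking \cktPlus{m + \lceil \log n \rceil} from Lemma~\ref{lem:sum_two_numbers}.

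The accounting is clean. Each $3$-to-$2$ compressor strictly decreases the total bit count by one (three in, two out), so going from $nm$ bits to $O(m + \log n)$ bits requires $O(nm)$ compressors of constant size; the final carry-propagate adder contributes only $O(m + \log n)$ more gates, for a total of $\Theta(nm)$. For depth, the reduction phase contributes $O(\log n)$ and the final adder contributes $O(\log(m+\log n)) = O(\log m + \log \log n)$, yielding $O(\log n + \log m)$ overall. The matching $\Omega(nm)$ size and $\Omega(\log n + \log m)$ depth lower bounds are immediate, since the function depends on every input bit and a single high-order output bit can depend on all $nm$ inputs, forcing fan-in depth $\Omega(\log(nm))$.

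The main obstacle I anticipate is the joint bound rather than either bound individually: a naive balanced binary tree of Lemma~\ref{lem:sum_two_numbers} additions also achieves size $O(nm)$ but has depth $\Theta(\log n \cdot \log(m + \log n))$, missing the target. Carry-save reduction is essential precisely because it flattens the tree to total depth $O(\log n)$ regardless of how the word length grows along the way, and defers the $O(\log(m + \log n))$ ripple-propagation cost to a single invocation at the very end.
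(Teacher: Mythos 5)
Your proposal is correct and follows essentially the same approach as the paper: a Wallace-tree carry-save reduction using $3$-to-$2$ compressors for $O(\log n)$ rounds, followed by a single carry-propagate addition via Lemma~\ref{lem:sum_two_numbers}. The only superficial difference is that you phrase the reduction column-wise (tracking bits by weight) while the paper phrases it word-wise (repeatedly replacing a triple of numbers $x,y,z$ by the pair $p = 0 \circ (x \oplus y \oplus z)$, $q = \mathrm{maj}(x,y,z) \circ 0$), but these are the same construction with the same size and depth accounting.
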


\begin{proof}
	We sketch the construction following the technique of Wallace~\cite{wallace1964suggestion}.
	Given three numbers $x, y, z \in \left\{ 0,1 \right\}^k$ in constant depth and using $\Theta(k)$ gates we can compute $p, q \in \left\{ 0,1 \right\}^{k+1}$ such that $x + y + z = p + q$.
	Here, $p$ is the coordinate-wise addition without carry, i.e., $0 \circ (x \oplus y \oplus z)$, and $q$ is the carry, i.e., $((x \wedge y) \vee (x \wedge z) \vee (y \wedge z)) \circ 0$.
	Thus as long as there are at least three numbers to sum we can use this to transform $x, y, z$ which takes $3k$ bits into $p, q$ which take $2k + 2$ bits and continue summing those.
	Doing this in parallel for disjoint triples of summants after $\mathcal{O}(\log_{3/2}(n)) = \mathcal{O}(\log(n))$ rounds we are left with just two numbers and we sum those using Lemma~\ref{lem:sum_two_numbers}.
\end{proof}

\begin{lemma}[Comparator]
	There is a uniform family of boolean circuits $\cktSwitch{m} \fromto{2m}{2m}$
	that given two numbers $x, y \in \left\{ 0, 1 \right\}^{m}$ outputs these two numbers sorted as integers, i.e., $\min(x, y) \circ \max(x, y)$.
	The size of the circuit \cktSwitch{m} is $\Theta(m)$ and depth is $\Theta(\log(m))$.
	\label{lem:switch}
\end{lemma}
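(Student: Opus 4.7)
The plan is to decompose the comparator into two stages: first compute a single comparison bit $c \in \{0,1\}$ indicating whether $x \le y$, and then use $c$ as a multiplexer selecting the minimum and maximum in parallel bit by bit. The first stage is the interesting step; the second stage is a simple selector of constant depth.

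For the comparison bit, I would use the standard ``prefix comparison'' idea implemented by a balanced binary tree. With $x = x_1 \cdots x_m$ and $y = y_1 \cdots y_m$ (most significant bit first), associate to each position $i$ the pair $(g_i, e_i) := (x_i \wedge \neg y_i,\ \neg(x_i \oplus y_i))$, where $g_i$ says ``$x > y$ locally at bit $i$'' and $e_i$ says ``$x_i = y_i$''. Define the associative combine operation on pairs $(g_L, e_L)$ and $(g_R, e_R)$ (left more significant) by
$$(g_L, e_L) \bullet (g_R, e_R) = \bigl(g_L \vee (e_L \wedge g_R),\ e_L \wedge e_R\bigr).$$
Combining all $m$ pairs with a balanced binary tree of depth $\lceil \log_2 m \rceil$ produces a single pair $(g, e)$ telling us whether $x > y$ or $x = y$; set $c := \neg g$, so $c = 1$ iff $x \le y$. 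This tree uses $O(m)$ constant-size gates and has depth $O(\log m)$.

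For the selector stage, note that
$$\min(x,y)_i = (c \wedge x_i) \vee (\neg c \wedge y_i), \qquad \max(x,y)_i = (\neg c \wedge x_i) \vee (c \wedge y_i),$$
for every bit position $i \in [m]$. Computing all $2m$ output bits in parallel requires only $O(m)$ additional gates of constant depth (the negation of $c$ is shared). Concatenating these two $m$-bit strings yields the required output $\min(x,y) \circ \max(x,y)$ of length $2m$. The total size is $\Theta(m)$ and the total depth is $\Theta(\log m)$; matching lower bounds are immediate since every output bit must depend on every input bit and the gates have fan-in two, so $\Omega(m)$ gates and $\Omega(\log m)$ depth are forced.

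The main obstacle, if there is one, is convincing oneself that the combine operation $\bullet$ is associative so that the prefix computation can be laid out as a balanced tree of depth $O(\log m)$; this is a short check on four cases of $e_L$ and $e_M$. Everything else is assembling known gadgets.
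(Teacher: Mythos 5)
The paper states this lemma without proof, treating it as a standard fact. Your construction is the standard and correct one: a parallel-prefix comparison tree computes a single bit $c$ indicating $x \le y$ in size $O(m)$ and depth $O(\log m)$, followed by a constant-depth multiplexer that selects each output bit, and the associativity of your combine operation $\bullet$ is indeed the routine check that makes the balanced tree legitimate. So the upper bound is sound.

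Your justification of the matching lower bounds is slightly off, though. You assert that ``every output bit must depend on every input bit,'' but this is false: the most significant bit of $\min(x,y)$ equals $x_1 \wedge y_1$, which depends on only two of the $2m$ inputs (similarly, the top bit of $\max(x,y)$ is $x_1 \vee y_1$). What you actually need, and what is true, is that \emph{some} output bit depends on all $2m$ input bits. For instance, the least significant bit of $\min(x,y)$ does: taking $x = 01^{m-1}$ and $y = 10^{m-1}$ gives $\min = x$ with last bit $1$, while flipping $x_1$ to $1$ makes $x > y$ so $\min = y$ with last bit $0$; analogous flips work for every other position. A single output gate that depends on all $2m$ inputs already forces $\Omega(m)$ fan-in-two gates and $\Omega(\log m)$ depth, which is all you need for the $\Theta(\cdot)$ claims. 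With that correction the proof is complete; the rest is exactly the textbook gadget the paper implicitly invokes.
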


Technique similar to the proof of the next lemma will be used also later in the proofs of Lemma~\ref{lemma:fast_count} and Lemma~\ref{lemma:fast_decompress} in order to achieve smaller circuit size. The main idea is to split inputs into smaller blocks and process the blocks independently by smaller circuits.

\newcommand{\cktOnesSizeDepthTODO}[1]{\cktOnes{ {#1} } \todo[inline,color=green!40]{\cktOnes{ {#1} } size $\Theta\left( 2^{\left( {#1} \right)} \right)$, depth $\Theta\left( {#1} \right)$}}
\begin{lemma}[Binary to unary]
	There is a uniform family of boolean circuits
	$$\cktOnes{b} \fromto{b+1}{2^b}$$
	such that for any number $x \in \left\{ 0,1 \right\}^{b+1}$ represented in binary the output consists of 
        $x$ ones followed by $2^{b} - x$ zeroes, provided $x \leq 2^b$.
        The circuit \cktOnes{b} has size $\Theta(2^{b})$ and depth $\Theta(\log(b))$.
	\label{lem:ones}
\end{lemma}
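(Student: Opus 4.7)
The plan is to use divide-and-conquer on the input bits so that two recursive calls with roughly halved parameter can be glued together using only $O(2^b)$ additional gates. The base case $b\le 1$ is handled by a trivial constant-size, constant-depth circuit. For $b\ge 2$ set $a=\lfloor b/2 \rfloor$ and split the input $x\in\{0,1\}^{b+1}$ (with $x\le 2^b$) into its top $b+1-a$ bits, representing $x_H\in[0,2^{b-a}]$, and its bottom $a$ bits, representing $x_L\in[0,2^a-1]$, so that $x=x_H\cdot 2^a + x_L$.

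Next I would view the $2^b$ outputs as a $2^{b-a}\times 2^a$ grid in which the cell $(i,j)$ carries the output bit with index $i\cdot 2^a+j$. That bit equals $1$ exactly when $i<x_H$, or $i=x_H$ and $j<x_L$. Invoke $\cktOnes{b-a}$ recursively on $x_H$ to obtain $u_0,\dots,u_{2^{b-a}-1}$ with $u_i=[i<x_H]$, and $\cktOnes{a}$ on the padded input $0\circ x_L$ to obtain $v_0,\dots,v_{2^a-1}$ with $v_j=[j<x_L]$. Derive the equality indicator $E_i=[i=x_H]$ from $u$ as $E_i = \neg u_i \wedge u_{i-1}$, using the convention $u_{-1}=1$, and set the $(i,j)$ output bit to $u_i \vee (E_i \wedge v_j)$. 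Each grid cell then costs $O(1)$ additional gates and $O(1)$ additional depth on top of the recursive outputs, so the glue logic has size $O(2^b)$.

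The size recursion $f(b)=f(\lceil b/2\rceil)+f(\lfloor b/2\rfloor)+O(2^b)$ is dominated by its root (the geometric sum $\sum_k 2^k\cdot 2^{b/2^k}$ converges to $O(2^b)$), giving $f(b)=O(2^b)$. The depth recursion $d(b)=d(\lceil b/2\rceil)+O(1)$ gives $d(b)=O(\log b)$, which is exactly the bound claimed. The main subtlety, and really the only thing to verify beyond routine Boolean bookkeeping, is the boundary case $x_H=2^{b-a}$: here the precondition $x\le 2^b$ forces $x_L=0$, every $u_i$ is $1$, hence every $E_i=0$, and the output is all ones as required. With that case handled, the construction is straightforward and no substantial obstacle remains.
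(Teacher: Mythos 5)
Your proof is correct but follows a genuinely different route from the paper's. The paper first builds a depth-$\Theta(b)$ circuit $\cktOnesDeep{b}$ by recursively splitting the \emph{count} $x$ into a left count $x_L=\min(2^{b-1},x)$ and a right count $x_R=\min(2^{b-1},\max(0,x-2^{b-1}))$, one bit of $x$ being consumed per level; it then reduces depth to $O(\log b)$ by a second, separate trick: chop the $2^b$ outputs into blocks of length roughly $b$, decide in depth $O(\log b)$ whether each block is constant $0$ or constant $1$ by comparing $x$ against the block boundaries, and use $\cktOnesDeep{\log\ell(b)}$ on the low-order bits of $x$ to fill in the single non-constant block. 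You instead split the \emph{bit representation} of $x$ into high and low halves, view the output as a $2^{b-a}\times 2^a$ grid, and use the lexicographic-comparison identity $[\,i\cdot 2^a+j<x\,] = [\,i<x_H\,]\vee\bigl([\,i=x_H\,]\wedge[\,j<x_L\,]\bigr)$ to glue the two recursive outputs with $O(1)$ gates per cell. Because the recursion parameter halves rather than decrements, you get depth $O(\log b)$ in one pass without the auxiliary blocking construction, and your size recursion $f(b)=f(\lceil b/2\rceil)+f(\lfloor b/2\rfloor)+O(2^b)$ is dominated by the root just as you say. Your treatment of the boundary case $x_H=2^{b-a}$ (forcing $x_L=0$, all $u_i=1$, all $E_i=0$) is exactly the right thing to check. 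Both constructions hit the stated bounds; yours is arguably cleaner as a single self-contained recursion, while the paper's two-stage argument isolates a simpler depth-$O(b)$ primitive that is re-used as a subroutine.
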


\begin{proof}
	We first show how to construct a uniform family of boolean circuits $\left( \cktOnesDeep{b} \right)$ which computes the same function, has the same size but depth $\mathcal{O}(b)$. Then we use \cktOnesDeep{\log(b)} to construct the desired circuit \cktOnes{b}.

	The main idea of the construction of \cktOnesDeep{b} is to recursively split the number $x$ into two numbers $x_L, x_R$ which describe how many bits set to one there should be in the first and the second half of the output.

	Each of the two numbers $x_L, x_R$ will be represented by $b$ bits with the convention that if the most significant bit is equal to one then the number is a power of two (corresponding to all output bits in this part of the output set to one).
	We recursively split the numbers $x_L, x_R$ in the same fashion until the numbers are represented by a single bit each at which point they will represent the output bits.
	We set
	\begin{align*}
		x_L &= \min(2^{b-1}, x) \\
		x_R &= \min(2^{b-1}, \max(0, x - 2^{b-1}))
	\end{align*}
	note that if the number $x$ is represented by $b+1$ bits ($x \in \left\{ 0,1 \right\}^{b+1}$) then the numbers $x_L, x_R$ can be represented by $b$ bits ($x_L, x_R \in \left\{ 0,1 \right\}^{b}$) as both of them represent at most half of $x$.
	Given $x \in \left\{ 0,1 \right\}^{b+1}$ we can compute the maximum and minimum defining $x_L, x_R$ by inspecting the two most significant bits of $x$:
	\begin{itemize}

		\item  If the most significant bit of $x$ is set to one (thus $x \ge 2^b$) we set $x_L = x_R = x/2$ each a power of two with the most significant bit set to one (and represented by a binary string~$10^{b-1}$).

		\item  If the most significant bit of $x$ is set to zero and the second most significant bit is set to one, then $x_L$ will be set to the binary number $10^{b-1}$ and $x_R$ will be $x - x_L$ (a copy of $x$ without the second most significant bit of~$x$).

		\item  If the two most significant bits of $x$ are equal to zero then $x_L = x$ (represented by one less bit than $x$) and $x_R = 0$.

	\end{itemize}
	See Figure~\ref{fig:ones_example} for an example of splitting of $x$ into $x_L, x_R$.

	\begin{figure}[h]
		\centering
		\begin{tikzpicture}
			\node [rectangle,draw]{0101} [level distance=10mm,sibling distance=50mm] child {
				node [rectangle,draw]{100} [level distance=10mm ,sibling distance=25mm] child {
					node [rectangle,draw] {10} [level distance=10mm ,sibling distance=15mm] child {
						node [rectangle,draw] {1}
					} child {
						node [rectangle,draw]{1}
					}
				} child {
					node [rectangle,draw] {10} [level distance=10mm ,sibling distance=15mm] child {
						node [rectangle,draw] {1}
					} child {
						node [rectangle,draw]{1}
					}
				}
			} child {
				node [rectangle,draw]{001} [level distance=10mm ,sibling distance=25mm] child {
					node [rectangle,draw] {01} [level distance=10mm ,sibling distance=15mm] child {
						node [rectangle,draw] {1}
					} child {
						node [rectangle,draw]{0}
					}
				} child {
					node [rectangle,draw] {00} [level distance=10mm ,sibling distance=15mm] child {
						node [rectangle,draw] {0}
					} child {
						node [rectangle,draw]{0}
					}
				}
			};
		\end{tikzpicture}
		\caption{
			An example of splitting numbers where $b = 3$.
			The input number $x = 5$ is represented as $0101$ and is split into $x_L = 100, x_R = 001$ which are themselves split recursively.
			The bottom nodes form the output.
		}
		\label{fig:ones_example}
	\end{figure}
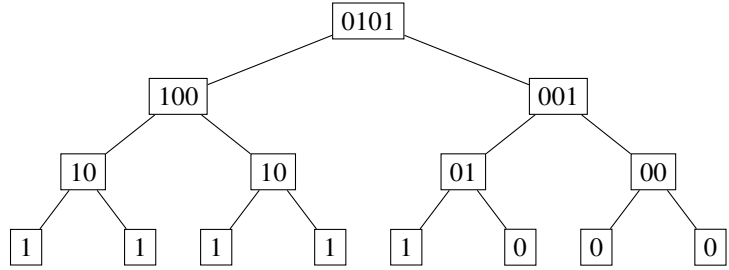

	Thus we can compute the transformation $x \mapsto (x_L, x_R)$ where $x \in \left\{ 0,1 \right\}^{b+1}$ and $x_L, x_R \in \left\{ 0,1 \right\}^{b}$ using a circuit of size $\Theta(b)$ and depth $\Theta(1)$.
	Then each of the numbers $x_L, x_R$ is again split into two, etc. until we get single bit numbers which represent the final output.
	The depth of the circuit \cktOnesDeep{b} is $\Theta(b)$ as each splitting can be done in constant depth. 
        If the circuit splitting $b+1$ bits into two $b$-bit numbers has size $s(b)\le cb+d$, for some universal constants $c$ and $d$, then
	the circuit \cktOnesDeep{b} has size:
	\begin{align*}
		s(b+1) + 2 s(b) + 4 s(b-2) + \ldots + 2^b s(1) &= \sum_{j = 0}^{b} 2^{j} s(b-j) \\
		&\leq  \sum_{j = 0}^{b} 2^{j} c(b-j) + 2^j d \\ 
		&\leq  c\left( 2^{b+2} - b - 1 \right) + 2^{b+1} d \\
		&= O(2^{b})
	\end{align*}

        To build the circuit \cktOnes{b} of depth $\mathcal{O}(\log(b))$ we proceed as follows.
	For any $y > 1$ we denote the largest power of two that is at most $y$ by $\ell(y) = \max\left\{ 2^{j} \mid j \in \N, 2^{j} \leq y \right\}$.
	We divide the output bits into blocks of $\ell(b)$ bits and for each block $j \in \left[ \frac{2^{b}}{\ell(b)} \right]$ of output bits with positions $[(j - 1) \ell(b) + 1, j \ell(b)]$ (counting positions from one) we compute if it should be constant (that is either constant zero when $x \leq (j-1) \ell(b)$ or constantly equal to one when $x > j \ell(b)$).
	This check for constant values can be done in each block by a circuit of size $\Theta(b)$ and depth $\Theta(\log(b))$.
	We compute \cktOnesDeep{\log(\ell(b))} with the input being the $\log(\ell(b))$ least significant bits of $x$. This circuit is of size $\mathcal{O}(b)$ and depth $\mathcal{O}(\log(b))$. In each block if the block should not be monochromatic then we use the output of that circuit as the output of the block, otherwise we use the appropriate constant one or zero copied $\ell(b)$-times as the output of the block.
\end{proof}

%

We will need a primitive that counts the number of occurrences of each string in the input.
A counting similar to Lemma~\ref{lem:count} appears in Appendix~A of the paper of Asharov~et~al.~\cite{asharov2021sorting}.
The construction of the counting circuit is rather straightforward, we just compare each input string $x_j$ with a given string $y$ getting an indicator bit set to one for equality and to zero for inequality and then sum the indicator bits.

\newcommand{\cktCountSizeDepthTODO}[2]{\cktCount{ {#1} }{ {#2} } \todo[inline,color=green!40]{\cktCount{ {#1} }{ {#2} } size $\Theta\left( \left( {#1} \right) \left( {#2} \right) 2^{\left( {#2} \right)} \right)$, depth $\Theta\left( \log\left( {#1} \right) +  \log\left( {#2} \right) \right)$}}
\begin{lemma}[Count]
	There is a uniform family of boolean circuits $\cktCount{n}{m} \fromto{n \cdot m}{2^m \lceil 1+\log(n) \rceil}$
	that given $x_1, x_2, \ldots, x_n \in \left\{ 0,1 \right\}^{m}$ counts the number of occurrences of each $y \in \left\{ 0,1 \right\}^{m}$ among the inputs, i.e., the circuit outputs $n_{0^m}, n_{0^{m-1}1}, \ldots, n_{1^m}$ where for each $y \in \left\{ 0,1 \right\}^{m}$, $n_y$ represents in binary $\abs{\left\{ j \in [n] \mid y = x_j \right\}}$ using $\lceil 1+\log(n) \rceil$ bits.
	The size of the circuit \cktCount{n}{m} is $\mathcal{O}(nm2^m)$ and depth $\mathcal{O}(\log(n) + \log(m))$.
	\label{lem:count}
\end{lemma}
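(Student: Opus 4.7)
The plan is to use the straightforward ``test-and-sum'' strategy hinted at in the discussion preceding the statement: for each of the $2^m$ candidate strings $y \in \{0,1\}^m$, I will in parallel produce an indicator bit $\mathbf{1}[x_j = y]$ for every input $j \in [n]$, and then sum these $n$ indicator bits to obtain $n_y$.

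For an individual equality test with a fixed target $y = y_1 y_2 \cdots y_m$, I will exploit that $y$ is hard-wired into the circuit. Then $x_j = y$ iff $\bigwedge_{i=1}^{m} \ell_i$, where the literal $\ell_i$ is $x_{j,i}$ if $y_i = 1$ and $\neg x_{j,i}$ if $y_i = 0$. I implement this AND as a balanced binary tree of AND gates with at most $m$ leaf negations, giving size $\Theta(m)$ and depth $\Theta(\log m)$ per test. Over all $n \cdot 2^m$ pairs $(y,j)$ this contributes $\Theta(nm 2^m)$ gates at depth $\Theta(\log m)$.

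For the second step I invoke, separately for each $y$, the summation circuit $\cktSum{n}{1}$ from Lemma~\ref{lem:sum_n_numbers} on the $n$ indicator bits associated with that $y$. It outputs $n_y$ in binary using $\lceil \log n \rceil + 1 = \lceil 1 + \log n \rceil$ bits with size $\Theta(n)$ and depth $\Theta(\log n)$, matching the required output width exactly so no padding is needed. Over the $2^m$ choices of $y$ this adds $\Theta(n 2^m)$ further gates, which is absorbed into $\mathcal{O}(nm 2^m)$, and adds $\Theta(\log n)$ to the depth, giving overall size $\mathcal{O}(nm 2^m)$ and depth $\mathcal{O}(\log n + \log m)$ as claimed.

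I do not anticipate any real obstacle: both sub-primitives are already available, so the construction is essentially a wiring diagram. The only subtlety worth flagging is that hard-wiring $y$ into each equality test is what keeps its depth at $\mathcal{O}(\log m)$; a naive implementation that first XORed $x_j$ against a stored copy of $y$ and then OR-reduced would be conceptually heavier while still achieving the same asymptotic bounds. Uniformity of the family follows immediately from the uniformity of $\cktSum{n}{1}$ together with the fact that the $2^m$ target strings can be enumerated by a simple counter.
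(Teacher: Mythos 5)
Your proof takes essentially the same approach as the paper: for each fixed $y$, compare $y$ to every $x_j$ to get indicator bits, then sum them with $\cktSum{n}{1}$, and do this in parallel over all $2^m$ strings $y$. The additional detail you give about implementing the equality test as a balanced AND-tree of literals with $y$ hard-wired is a fine implementation of the $\mathcal{O}(m)$-size, $\mathcal{O}(\log m)$-depth comparator the paper invokes, and all size and depth bounds match.
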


\begin{proof}
	For each $y \in \left\{ 0,1 \right\}^{m}$ we build a sub-circuit computing the number of times $y$ occurs among the inputs $x_1,\dots,x_n$.
This is done by comparing $y$ to each $x_i$ in parallel, $i\in [n]$, to get an indicator bit whether they are equal. We obtain $n_y$ by summing up the 
indicator bits using the circuit \cktSum{n}{1} of size $\Theta(n)$ and depth $\Theta(\log(n))$ from Lemma~\ref{lem:sum_n_numbers}. Comparing $y$ to $x_i$ can be done by a circuit of size $\mathcal{O}(m)$ and depth $\mathcal{O}(\log(m))$. So we get $n_y$ using a circuit of size $\Theta(nm)$ and depth $\Theta(\log(n) + \log(m))$.
	Doing this for each $y \in \left\{ 0,1 \right\}^{m}$ in parallel we get a circuit of size $\Theta(nm 2^{m})$ and depth $\Theta(\log(n) + \log(m))$.
\end{proof}

We will need also an inverse operation for the counting.
To construct a circuit that decompresses the counts we would like to first compute the interval where a given string $x$ should appear and then get indicator bits for this interval. We can compute the interval using prefix sums of the counts.
To get the indicator bits for the interval we utilize the circuit from Lemma~\ref{lem:ones} which outputs a given number of bits set to one followed by bits set to zero.


\newcommand{\cktDecompressSizeDepthTODO}[2]{\cktDecompress{ {#1} }{ {#2} } \todo[inline,color=green!40]{\cktDecompress{ {#1} }{ {#2} } size $\Theta\left( \left( {#1} \right) \left( {#2} \right) 2^{ {#2} } + 2^{2({#2})} \log\left( {#1} \right) \right)$, depth $\Theta\left( {#2} + \log\left( {#1} \right) \right)$}}
\begin{lemma}[Decompress]
	There is a uniform family of boolean circuits $$\cktDecompress{n}{m} \fromto{\lceil 1+\log(n) \rceil 2^{m}}{n \cdot m}$$
	that \emph{decompresses its input} that is on input numbers $n_{0^m}, n_{0^{m-1}1}, \ldots, n_{1^m}$, each represented in binary by $\lceil 1+\log(n) \rceil$ bits, where $\sum_{x \in \left\{ 0,1 \right\}^{m}} n_x = s \leq n$, outputs the string
$$(\underbrace{00\cdots 0}_m)^{n_{00\cdots 0}} \circ  (\underbrace{00\cdots 0}_{m-1}1)^{ n_{00\cdots 01}} \circ  (\underbrace{00\cdots 0}_{m-2}10)^{ n_{00\cdots 010}} \circ  (\underbrace{00\cdots 0}_{m-2}11)^{ n_{00\cdots 011}} \circ \cdots \circ (\underbrace{11\cdots 1}_m)^{n_{11\cdots 1}} \circ (0^m)^{n-s}.$$ When $s>n$ the output might be arbitrary.
	The size of the circuit \cktDecompress{n}{m} is $\mathcal{O}(n m 2^m + 2^{2m} \log(n))$ and depth $\mathcal{O}(m + \log(\log(n)))$.
	\label{lem:decompress}
\end{lemma}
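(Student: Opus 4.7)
The plan is to follow the sketch in the lead-in to the lemma: compute prefix sums of the counts to identify, for each $y \in \{0,1\}^m$, the interval of output positions it should occupy; build indicator vectors for these intervals using $\cktOnes{\cdot}$; and assemble the decompressed output bitwise by OR-ing.

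Concretely, order the strings $y\in \{0,1\}^m$ lexicographically and let $p_y = \sum_{z < y} n_z$. When $s \leq n$, the block for $y$ should occupy positions $p_y + 1, \ldots, p_y + n_y$ of the output. I would first compute all $p_y$ and $p_y + n_y$ in binary. The simplest realization, and the one matching the stated size bound, is to compute each of the $2^m$ prefix sums independently using $\cktSum{2^m}{\lceil 1 + \log n \rceil}$ from Lemma~\ref{lem:sum_n_numbers}; each such invocation has size $O(2^m \log n)$ and depth $O(m + \log \log n)$, so all $2^m$ of them together contribute $O(2^{2m} \log n)$ to the size and $O(m + \log \log n)$ to the depth.

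Next, for each $y$ I would apply $\cktOnes{\lceil \log n \rceil}$ (Lemma~\ref{lem:ones}) twice, on the binary representations of $p_y + n_y$ and $p_y$, truncating each output to its first $n$ bits to obtain strings $R_y$ and $L_y$. The assumption $s \leq n$ ensures $p_y + n_y \leq n \leq 2^{\lceil \log n\rceil}$, so the precondition of Lemma~\ref{lem:ones} is met (and when $s > n$ the lemma permits arbitrary output anyway). Setting $I_y = R_y \wedge \lnot L_y$ yields a length-$n$ indicator vector that is $1$ exactly on the block for $y$. Each $\cktOnes{\cdot}$ call contributes size $O(n)$ and depth $O(\log \log n)$, for a total of $O(n 2^m)$ size and $O(\log\log n)$ depth. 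Finally, since the bits of $y$ are constants wired into the circuit, bit $i$ of the output at position $j$ is simply $\bigvee_{y \,:\, y_i = 1} I_y(j)$; the $nm$ such ORs, each of fan-in at most $2^{m-1}$, have total size $O(nm 2^m)$ and depth $O(m)$. Adding up all contributions gives the claimed size $O(nm 2^m + 2^{2m} \log n)$ and depth $O(m + \log \log n)$; correctness on the padding region $[s+1, n]$ is automatic because all $I_y$ vanish there, so every output bit OR is empty and evaluates to $0$.

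The construction is essentially a stitching together of already established primitives, so no single step poses a real obstacle. The one design choice worth flagging is the deliberately naive (unshared) computation of the $2^m$ prefix sums, which is what produces the $2^{2m} \log n$ term; a shared prefix-sum circuit would shrink this, but in the intended regime $2^m$ is small and such sharing would only complicate the argument without improving the target bound.
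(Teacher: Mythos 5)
Your proposal is correct and follows essentially the same approach as the paper: compute prefix sums $p_y$ with $O(2^{2m}\log n)$ gates, build interval indicator vectors $I_y$ from two applications of $\cktOnes{\cdot}$ per $y$, and assemble each output bit by OR-ing the relevant $I_y(j)$. The only cosmetic differences are $R_y \wedge \lnot L_y$ in place of the paper's XOR of the two $\cktOnes{}$ outputs (equivalent since $p_y \le p_y+n_y$), and using $p_y + n_y$ in place of the next prefix sum $p_{y+1}$, which is the same quantity.
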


\begin{proof}
	Given $n_{0^m}, n_{0^{m-1}1}, \ldots, n_{1^m}$ we can compute the total sum $s = \sum_{x \in \left\{ 0,1 \right\}^{m}} n_x$ and for each $y\in \{0,1\}^m$, the number $p_y$ of binary strings before the first occurrence of $y$, i.e., $p_y = \sum_{x \in \left\{ 0,1 \right\}^m \colon x < y} n_x$.
	Each of the numbers $p_y$ can be computed using the circuit \cktSum{y}{\lceil 1+\log(n) \rceil} from Lemma~\ref{lem:sum_n_numbers} of size $\mathcal{O}(2^{m} \log(n))$ and depth $\mathcal{O}(m + \log \log(n))$. Similarly for $s$.
	Thus we can get all numbers $p_y$ in parallel by a circuit of size $\mathcal{O}(2^{2m} \log(n))$.
	A given string $y \in \left\{ 0,1 \right\}^{m}$, $y\neq 1^m$, should appear at each position $j \in [p_y + 1, p_{y+1}]$. Let $I_{y} \in \{0,1\}^n$ be the indicator vector of positions where $y$ should appear in the output.
	We can use $\cktOnes{\lceil 1+\log(n) \rceil}(p_y) \oplus \cktOnes{\lceil 1+\log(n) \rceil}(p_{y+1})$ to calculate $I_y$ for each $y\neq 1^m$. For $y = 1^{m}$, $I_y = \cktOnes{\lceil 1+\log(n) \rceil}(p_y) \oplus \cktOnes{\lceil 1+\log(n) \rceil}({s})$.
	The size of \cktOnes{\lceil 1+\log(n) \rceil} is $\Theta(n)$. As there are $2^m$ different $y$'s, we need a circuit of size $\Theta(n2^m)$ and depth $\Theta(\log \log(n))$ to calculate all $I_y$'s.

        If $x_1,x_2,\dots,x_n$ are the output integers, for each output position $j \in [n]$, we calculate the $k$-bit of $x_j$ as
	\begin{align*}
		\bigvee_{y \in \left\{ 0,1 \right\}^{m}} ((I_y)_j \wedge y_k)
	\end{align*}
	To compute all these ORs we need a circuit of total size $\Theta(nm2^{m})$ and depth $\Theta(m)$.
\end{proof}

\section{Sorting Circuits from AKS Sorting Networks}
\label{sec:sorting}

In this section we recall the construction of circuits for sorting from the Ajtai-Koml\'os-Szemer\'edi sorting networks. They will serve as the basic primitive for our later constructions. 

\paragraph*{Sorting networks.}
Sorting networks model parallel algorithms that sort values using only comparisons.
A sorting network consists of $n$ wires and $s$ comparators. The wires extend from left to right in parallel.
Each wire carries an integer from left to right. Any two wires can be connected by a comparator at any point along their length. The comparator
swaps the values carried along the two wires if the higher wire carries a higher value at that point otherwise it has no effect. The sorting network
should be such when we input arbitrary integers to the wires on the left, the integers always exit in sorted order from top to bottom.
The \emph{depth} of a sorting network is the maximum number of comparators a value can encounter on its way.
A figure of a small sorting network is given in Figure~\ref{fig:aks_example}.
       \begin{figure}[t]
                \centering
               \begin{tikzpicture}[
                                dot/.style = {circle, fill, minimum size=#1, inner sep=0pt, outer sep=0pt},
                                dot/.default = 5pt  
                    ]
                        \node (x) at (1,1) {$x$};
												\node (y) at (1,0.5) {$y$};
                        \node (z) at (1,0) {$z$};
												\node[dot] (minxy) at (2,1) {}; \node[dot] (maxxy) at (2,0.5){}; \draw (minxy) -- (maxxy);
                        \draw (x) -- (minxy);
                        \draw (y) -- (maxxy);
                        \node (minxylabel) at (3.5,1) {$\min(x,y)$};
												\node (maxxylabel) at (3.5,0.5) {$\max(x,y)$};
                        \draw (minxy) -- (minxylabel);
                        \draw (maxxy) -- (maxxylabel);
                        \node[dot] (minxyz) at (5,1) {}; \draw (minxylabel) -- (minxyz);
                        \node[dot] (maxxz) at (5,0) {};
                        \draw (z) -- (maxxz);
                        \draw (minxyz) -- (maxxz);
                        \node (maxxzlabel) at (6.5,0) {$\max(\min(x,y),z)$};
												\node[dot] (midxyz) at (8,0.5) {};
                        \node[dot] (maxxyz) at (8,0) {};
                        \draw (midxyz) -- (maxxyz);
                        \draw (maxxylabel) -- (midxyz);
                        \draw (maxxz) -- (maxxzlabel);
                        \draw (maxxzlabel) -- (maxxyz);
                        \node (minlabel) at (9.5,1) {$\min(x,y,z)$};
												\node (medlabel) at (9.5,0.5) {$\text{median}(x,y,z)$};
                        \node (maxlabel) at (9.5,0) {$\max(x,y,z)$};
                        \draw (minxyz) -- (minlabel);
                        \draw (midxyz) -- (medlabel);
                        \draw (maxxyz) -- (maxlabel);
                \end{tikzpicture}
                \caption{
                        An example of a sorting network with three inputs (the horizontal lines), three comparators (the vertical lines), and depth three.
                        The inputs on the left are numbers $x,y,z$ and after each comparator we noted what is on the horizontal line.
                        Note that the bottom most output is $\max(\max(x,y),\max(\min(x,y),z)) = \max(x,y,z)$ and the middle one is $\min(\max(x,y),\max(\min(x,y),z))$ which is the median.
                }
                \label{fig:aks_example}
        \end{figure}
For a formal definition see, e.g., \cite{aks_1983sorting}. 
Observe that if the depth of a sorting network is $d$ and the number of inputs is $n$ then there are at most $s \le nd$ comparators.
Ajtai, Koml\'os and Szemer\'edi~\cite{aks_1983sorting} established the existence of sorting networks of logarithmic depth.

\begin{theorem}[AKS~\cite{aks_1983sorting}]
	For any integer $n\ge 1$, there is a sorting network for $n$ integers of depth $\mathcal{O}(\log(n))$.
	\label{thm:aks_network}
\end{theorem}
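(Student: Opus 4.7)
The plan is to construct the network via the paradigm of approximate halvers built from bipartite expander graphs. First I would fix a bipartite expander on $2n$ vertices with constant degree $d$ and expansion factor close to $d$; from such an expander one gets an $\varepsilon$-halver, a constant-depth comparator network on $2n$ wires such that for every $k \le n$, among the $k$ smallest (resp.\ largest) inputs, at most $\varepsilon k$ end up in the top (resp.\ bottom) half. The halver uses only $O(n)$ comparators arranged in a constant number of matchings drawn from the expander edges; the halving property is a direct consequence of the expansion guarantee, since a large set in the wrong half would violate expansion after the comparators fire.

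Next I would boost halvers into $\varepsilon$-separators by repeated halving with geometrically shrinking $\varepsilon$, and then use separators as the basic node operation of the main network. The overall structure is a complete binary tree of depth $\log n$: each internal node holds a bag of elements proportional to the size of its subtree, and time proceeds in rounds. In each round every node applies an $\varepsilon$-separator to its bag, shipping likely-small elements to its left child, likely-large elements to its right child, and the misclassified minority back to its parent for a retry. Elements feed in at the root and, after enough rounds, accumulate at the leaves in sorted order.

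The main technical obstacle is the potential-function analysis: one must choose $\varepsilon$ and the relative bag sizes carefully so that the number of misplaced elements at every node decays by a constant factor per round, despite the fact that corrections travel upward and can cascade. The right invariant measures the total "displacement'' weighted along the tree, and one shows the separator's contraction rate dominates the propagation of errors from children. Because each round has constant depth and the total number of rounds is $O(\log n)$, the composed comparator network has depth $O(\log n)$, giving the theorem.

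Getting this quantitative contraction right, together with the explicit expander construction needed for uniformity of the network, is what makes the proof genuinely hard; at the level of a sketch I would treat the existence of the halvers/separators and the convergence of the tree-based process as the two black boxes to prove and then assemble.
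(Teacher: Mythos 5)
The paper does not prove this theorem; it cites it directly from Ajtai, Koml\'os, and Szemer\'edi~\cite{aks_1983sorting} and treats the sorting network (and the circuits derived from it) as a black box primitive throughout. So there is no in-paper argument to compare against.

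That said, your sketch is a faithful high-level outline of the original AKS construction: constant-degree expanders yield $\varepsilon$-halvers in constant depth, halvers compose into $\varepsilon$-nearsorters/separators, and the full network runs these separators at the nodes of a complete binary tree where each node maintains a bag of candidates, ships likely-small and likely-large elements to its children, and returns the small misclassified fraction to its parent. The genuinely hard content---the precise bag-size schedule, the choice of $\varepsilon$, and the potential-function argument showing displaced elements contract geometrically despite upward error propagation---is exactly where the difficulty lies, and you are right to flag it rather than hand-wave it. Two small points worth noting: the original AKS analysis (and Paterson's later simplification) is subtle enough that getting the constants to close is itself a substantial paper, so calling the convergence a ``black box to prove'' is an honest but large promissory note; and for the purposes of this paper uniformity is actually an explicit open question the authors raise in the conclusion, so your remark that explicit expanders are needed for uniformity is apt---the paper's circuits inherit non-uniformity precisely from this theorem and from Pippenger's super-concentrators.
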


\paragraph*{Sorting circuits.} Here we give a precise definition of sorting by a circuit. First we consider a circuit sorting $n$ integers, each of them $m$ bits long.

\begin{definition}[Sort]
	Let $n, m \in \N$, and $\left( C_{n, m} \right)$ be a family of boolean circuits. 
	We say that the \emph{circuit $C_{n, m} \fromto{nm}{nm}$ sorts its input} interpreted as $n$ integers $x_1, x_2, \ldots, x_n$ each represented by $m$ bits if it outputs $y_1, y_2, \ldots, y_n \in \left\{ 0,1 \right\}^{m}$ such that:
			\begin{enumerate}

				\item  The outputs are sorted: For any $i < j \in [n]$, $y_i \leq y_j$.

				\item  The inputs and outputs form the same multiset: For each $j \in [n]$, $\abs{ \left\{ i \in [n] \mid y_i = x_j \right\} } = \abs{ \left\{ i \in [n] \mid x_i = x_j \right\} }$.

			\end{enumerate}
	\label{def:sorting_circuit}
\end{definition}

An immediate consequence of the existence of AKS sorting networks is the existence of shallow sorting circuits, since
by Lemma~\ref{lem:switch}, each comparator can be replaced by a small circuit:

\newcommand{\cktAKSSizeDepthTODO}[2]{\cktAKS{ {#1} }{ {#2} } \todo[inline,color=green!40]{\cktAKS{ {#1} }{ {#2} } size $\Theta\left( \left( {#1} \right) \left( {#2} \right) \log\left( {#1} \right) \right)$, depth $\Theta\left( \log\left( {#1} \right) \log\left( {#2} \right) \right)$}}
\begin{corollary}
	There is a family of boolean circuits $\cktAKS{n}{m} \fromto{n \cdot m}{n \cdot m}$
	that on an input $x_1, x_2, \ldots, x_n \in \left\{ 0,1 \right\}^{m}$ sorts these numbers.
	The size of the circuit \cktAKS{n}{m} is $\mathcal{O}(n m \log(n))$ and depth $\mathcal{O}(\log(n) \log(m) )$.
	\label{col:aks_circuit}
\end{corollary}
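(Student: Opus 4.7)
The plan is to combine the AKS sorting network from Theorem~\ref{thm:aks_network} with the comparator circuit from Lemma~\ref{lem:switch} in the most direct way possible. Fix $n,m\ge 1$. Start with an AKS sorting network on $n$ wires of depth $d = \mathcal{O}(\log(n))$; this network contains at most $s \leq n d = \mathcal{O}(n \log(n))$ comparators, as recorded in the observation following the definition of sorting networks.

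The construction is to interpret each of the $n$ wires as carrying an $m$-bit integer rather than a single value, and to replace every comparator in the network with a copy of the comparator circuit \cktSwitch{m} from Lemma~\ref{lem:switch}. The inputs of the resulting circuit are the $m$-bit blocks feeding the leftmost endpoints of the $n$ wires, and the outputs are the $m$-bit blocks read off the rightmost endpoints. Correctness is immediate: \cktSwitch{m} outputs $\min(x,y)\circ\max(x,y)$, which is exactly the action of a comparator on two integer-valued wires, so the simulated network sorts its inputs and Definition~\ref{def:sorting_circuit} is satisfied.

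For the bounds, note that \cktSwitch{m} has size $\Theta(m)$ and depth $\Theta(\log(m))$. Multiplying the gate count by the number of comparators gives total size $\mathcal{O}(n \log(n)) \cdot \Theta(m) = \mathcal{O}(n m \log(n))$. For depth, any input-to-output path in the constructed circuit traverses at most $d$ comparator gadgets in series, and each gadget contributes depth $\Theta(\log(m))$, yielding overall depth $\mathcal{O}(\log(n) \log(m))$, as claimed.

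There is no real obstacle here: the whole argument is a black-box replacement of comparators by the circuit from Lemma~\ref{lem:switch}, followed by two short multiplications of the relevant size/depth parameters. The only minor care needed is to make sure the depth is multiplicative rather than additive (since the gadgets are chained along a wire), which follows from the standard definition of depth for a sorting network.
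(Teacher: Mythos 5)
Your proposal is correct and is exactly the paper's argument: the paper states that the corollary follows immediately from Theorem~\ref{thm:aks_network} by replacing each comparator of the AKS network with the circuit $\cktSwitch{m}$ of Lemma~\ref{lem:switch}, which is precisely the black-box substitution you describe, with the same size and depth bookkeeping.
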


We also need circuits that sort the $n$ input integers, each of $m$ bits, by the $k$ most significant bits where $k<m$. Such sorting can be thought of as sorting (key, value) pairs, where keys are $k$-bit long and values $(m-k)$-bit long. Formally it can be defined as follows:

\begin{definition}[Partial Sort]
	Let $n, m, k \in \N$, be such that $k<m$, and let $\left( C_{n, m, k} \right)$ be a family of boolean circuits. 
	We say that the \emph{circuit $C_{n, m, k} \fromto{nm}{nm}$ partially sorts by the first $k$ bits its input} interpreted as $n$ integers $x_1, x_2, \ldots, x_n$ each represented by $m$ bits if it outputs $y_1, y_2, \ldots, y_n \in \left\{ 0,1 \right\}^{m}$ such that:
			\begin{enumerate}

				\item  The outputs are partially sorted: For any $i < j \in [n]$, $(y_i)_1 (y_i)_2 \cdots (y_i)_k \leq (y_j)_1 (y_j)_2 \cdots (y_j)_k$.

				\item  The inputs and outputs form the same multiset: For each $j \in [n]$, $\abs{ \left\{ i \in [n] \mid y_i = x_j \right\} } = \abs{ \left\{ i \in [n] \mid x_i = x_j \right\} }$.

			\end{enumerate}
	\label{def:partial_sorting_circuit}
\end{definition}

Using a circuit of size $\mathcal{O}(m)$ and depth $\mathcal{O}(\log(k))$ implementing a comparator which swaps two $m$-bit integers based only on the first $k$ bits we get the following variant of the previous corollary.

\newcommand{\cktAKSPartialSizeDepthTODO}[3]{\cktAKSPartial{ {#1} }{ {#2} }{ {#3} } \todo[inline,color=green!40]{\cktAKSPartial{ {#1} }{ {#2} }{ {#3} } size $\Theta\left( \left( {#1} \right) \left( {#2} \right) \log\left( {#1} \right) \right)$, depth $\Theta\left( \log\left( {#1} \right) \log\left( {#3} \right) \right)$}}
\begin{corollary}
	There is a family of boolean circuits $\cktAKSPartial{n}{m}{k} \fromto{n \cdot m}{n \cdot m}$, for $k \leq m$ and $k \leq \log(n)$,
	that on input $x_1, x_2, \ldots, x_n \in \left\{ 0,1 \right\}^{m}$ partially sorts these numbers according to their $k$ most significant bits.
	That is if $y_i, y_j$ are two output numbers where $i < j$ then we have $\lfloor y_i / 2^{m-k} \rfloor \leq \lfloor y_j / 2^{m-k} \rfloor$.
	The size of the circuit \cktAKSPartial{n}{m}{k} is $\mathcal{O}(n m \log(n))$ and depth $\mathcal{O}(\log(n) \log(k))$.
	\label{col:aks_partial_circuit}
\end{corollary}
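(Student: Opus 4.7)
The plan is to follow the same recipe as in the proof of Corollary~\ref{col:aks_circuit}: take the AKS sorting network from Theorem~\ref{thm:aks_network}, which has depth $d=\mathcal{O}(\log n)$ and therefore $s\le nd = \mathcal{O}(n\log n)$ comparators, and replace each comparator by a boolean sub-circuit. The only change is that the sub-circuit must be a \emph{partial comparator} which, given two $m$-bit strings $x,y$, outputs the pair $(x,y)$ or $(y,x)$ so that the first component has the lexicographically smaller length-$k$ prefix (ties broken arbitrarily but consistently, e.g.\ by keeping the original order). Plugging this into the AKS network clearly satisfies both conditions of Definition~\ref{def:partial_sorting_circuit}, since AKS sorts any totally preordered input and the map $x\mapsto x_1x_2\cdots x_k$ induces such a preorder.

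The construction of the partial comparator proceeds in two ingredients. First, I would build a $k$-bit comparison gadget that outputs a single bit $b$ indicating whether $x_1\cdots x_k \le y_1\cdots y_k$. A standard way is to compute, for each $j\in[k]$, the bit ``$x$ and $y$ agree on positions $1,\dots,j-1$ and $x_j<y_j$''; these $k$ candidate bits are then OR-ed together, together with a bit detecting full equality. The common-prefix bits can be obtained by a prefix-AND over the $k$ equality indicators $\overline{x_j\oplus y_j}$, which is a classic $\mathcal{O}(k)$-size, $\mathcal{O}(\log k)$-depth subcircuit. This yields a comparison circuit of size $\mathcal{O}(k)=\mathcal{O}(m)$ and depth $\mathcal{O}(\log k)$. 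Second, given the single bit $b$, I would multiplex the two $m$-bit strings bit-by-bit: each output bit is $(b\wedge x_i)\vee(\bar b\wedge y_i)$ on one output line and the symmetric formula on the other. This multiplexing uses $\mathcal{O}(m)$ gates and constant depth. The whole partial comparator therefore has size $\mathcal{O}(m)$ and depth $\mathcal{O}(\log k)$.

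Substituting this gadget for every comparator in the AKS network of Theorem~\ref{thm:aks_network} gives a circuit of size $\mathcal{O}(n\log n)\cdot \mathcal{O}(m)=\mathcal{O}(nm\log n)$ and depth $\mathcal{O}(\log n)\cdot \mathcal{O}(\log k)=\mathcal{O}(\log n\log k)$, matching the claim. There is no real obstacle here; the only point that requires care is designing the $k$-bit prefix comparator in depth $\mathcal{O}(\log k)$ rather than the naive $\mathcal{O}(\log m)$ one would get by comparing the full $m$-bit strings, since otherwise we would not shave the depth from $\log m$ down to $\log k$. The hypothesis $k\le m$ ensures the truncation is well defined, and $k\le\log n$ is not actually needed for the bounds we derive but is consistent with the intended regime of application.
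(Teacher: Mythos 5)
Your proposal is correct and follows essentially the same route as the paper: the paper also obtains \cktAKSPartial{n}{m}{k} from the AKS network of Theorem~\ref{thm:aks_network} by replacing each comparator with a size-$\mathcal{O}(m)$, depth-$\mathcal{O}(\log k)$ gadget that compares only the $k$ most significant bits and multiplexes the full $m$-bit strings accordingly. You simply spell out the prefix-comparison subcircuit in more detail than the paper, which states the replacement in a single sentence; your remark that $k\le\log n$ is not needed for these particular bounds is also accurate.
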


\section{Sorting $n$ Binary Strings of Length $m$}
\label{sec:sorting_strings}

Here we present a sorting circuit for short numbers.
The construction consists of two circuits.
The first circuit counts the number of occurrences of various strings (as stated in Lemma~\ref{lemma:fast_count}) and the second circuit 
decompresses these counts.
Both of these constructions use heavily the following technique: we divide the problem into blocks which can be efficiently sorted using the AKS-based circuit. These blocks will be of size between $2^{O(m)}$ and $n/2^{O(m)}$ where $m$ is the binary length of the input integers.

Thus when we sort the numbers inside each block and subdivide the block into parts, then by the pigeon-hole principle, most of the parts will be monochromatic (containing copies of a single string only).
We can then separately count the strings in monochromatic parts (count the first string and then multiply that by the length of the part) and in the non-monochromatic parts (there are not that many strings in total in non-monochromatic parts).
However a priori we do not know which parts will be monochromatic and which will be not. To save on circuitry we use sorting (on whole parts) to move the non-monochromatic parts aside. We build the (expensive) counting circuits only for non-monochromatic parts. 

\begin{proof}[Proof of Lemma~\ref{lemma:fast_count}]
	For the sake of simplicity let us assume that $n$ is a power of two so, it is divisible by $2^{8m}$. (By our assumption $n \geq 2^{10m}$, thus if $n$ is not a power of two take the circuit for the closest power of two larger than~$n$ and feed ones for the extra input bits.)
	We partition the input into $n / 2^{8m}$ blocks each consisting of $2^{8m}$ numbers.
	We sort each block by the circuit \cktAKS{2^{8m}}{m} of size $\mathcal{O}(2^{8m} m \log(2^{8m})) = \mathcal{O}(2^{8m} m^2)$ and depth $\mathcal{O}(m\log(m))$ as given in Corollary~\ref{col:aks_circuit} .
	Thus for this phase we need a circuit of total size $\mathcal{O}(n m^2)$.

	Then we subdivide each block into $2^{6m}$ parts each consisting of $2^{2m}$ numbers.
	Observe that most of these parts are monochromatic: a part is monochromatic if it contains $2^{2m}$ copies of a single $m$-bit number.
	We can upper bound the number of non-monochromatic parts by $2^{m}$.
	We can add a single indicator bit to each part indicating whether this part is monochromatic. As the parts are sorted it is enough to compare the first and last number in each part and set the bit to $1$ if the numbers are equal and to $0$ otherwise. 
We sort the parts prefixed by their indicator bit using the circuit \cktAKSPartial{2^{6m}}{1 + m 2^{2m}}{1} from Corollary~\ref{col:aks_partial_circuit} to move all non-monochromatic parts to the front of each block.
	Thus the total size of the circuit sorting parts inside each block is $\mathcal{O}\left( \frac{n}{2^{8m}} (2^{6m}) (1 + m 2^{2m}) 6m \right) = \mathcal{O}(nm^2)$ and depth $\mathcal{O}(m)$.
	We call the first $2^{m}$ parts of each block \emph{potentially non-monochromatic}. The other parts are \emph{definitely monochromatic}.

	From each definitely monochromatic part we take the first $m$-bit number and we count them.
	This can be done by the circuit \cktCount{\frac{n}{2^{8m}} (2^{6m} - 2^{m})}{m} from Lemma~\ref{lem:count} of size $\mathcal{O}\left( \left( \frac{n}{2^{2m}} - \frac{n}{2^{7m}} \right) m 2^{m} \right) \leq \mathcal{O}(nm)$ and depth $\mathcal{O}(\log(n) + \log(m))$.
	By multiplying each count by $2^{2m}$ (that is by appending $2m$ zeroes) we get the number of occurrences of each number in the definitely monochromatic parts.

	As there are relatively few (exactly $\frac{n}{2^{8m}} 2^{m} 2^{2m}$) numbers overall in potentially non-monochromatic parts we can use the circuit \cktCount{n / 2^{5m}}{m} from Lemma~\ref{lem:count} to count those numbers by a circuit of size $\mathcal{O}\left( \frac{n}{2^{5m}} m 2^{m} \right) \leq \mathcal{O}(nm)$ and depth $\mathcal{O}(\log(n) + \log(m))$.

        Thus we get two vectors of counts for numbers in potentially non-monochromatic and definitely monochromatic blocks.
	Finally, we add the two vectors of $2^{m}$ numbers each consisting of at most $\lceil 1+\log(n) \rceil$ bits to get the resulting counts.
        This uses a circuit of size $\mathcal{O}(m2^m)=O(n)$ and depth $\mathcal{O}(\log \log(n))$. Thus, the overall size of the circuit is $\mathcal{O}(nm^2)$ and depth $\mathcal{O}(\log(n) + m\log(m))$.
\end{proof}

\begin{lemma}
	For integers $n, m\ge 1$ such that $m \leq \log(n) / 11$, 
        there is a family of boolean circuits $$\cktFastDecompress{n}{m} \fromto{\lceil 1+\log(n) \rceil 2^{m}}{n \cdot m}$$
	that decompresses its input as in Lemma~\ref{lem:decompress}.
	The size of \cktFastDecompress{n}{m} is $\mathcal{O}(nm^2)$ and its depth is $\mathcal{O}(m \log(m) + \log \log(n))$.
	\label{lemma:fast_decompress}
\end{lemma}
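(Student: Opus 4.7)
The plan is to mirror the compression of Lemma~\ref{lemma:fast_count}. I would partition the $n$ output positions into $B=n/2^{8m}$ blocks of $2^{8m}$ and each block into $2^{6m}$ parts of $2^{2m}$; since the output is sorted and uses only $2^m$ distinct values, at most $2^m-1$ parts in any block are non-monochromatic, and monochromatic parts are constant runs producible for free via fan-out.

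First, I would compute the global prefix sums $s_y=\sum_{x\le y}n_x$ in parallel via Lemma~\ref{lem:sum_n_numbers}, at cost $\mathcal{O}(2^{2m}\log n)=o(n)$ and depth $\mathcal{O}(m+\log\log n)$. The key trick is to then compute every local cumulative sum $s_y^{(b)}=\max(0,\min(s_y-(b-1)2^{8m},2^{8m}))$ cheaply. Writing $s_y=T_y\cdot 2^{8m}+r_y$, the function $b\mapsto s_y^{(b)}$ is a step function equal to $2^{8m}$ for $b\le T_y$, to $r_y$ at $b=T_y+1$, and $0$ thereafter; applying Lemma~\ref{lem:ones} to $T_y$ yields a $B$-bit thermometer indicator of $\{b\le T_y\}$ in $\mathcal{O}(B)$ gates and $\mathcal{O}(\log\log n)$ depth, and a width-$8m$ multiplexer assembles all $s_y^{(b)}$. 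Summed over $y$ this costs $\mathcal{O}(nm/2^{7m})$ and, after pairwise subtractions, delivers all local block counts $n_y^{(b)}$.

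On each block in parallel I would run a block decompressor. Using the local prefix sums, for every part $p$ I obtain the value $v_p$ at its first position and $w_p$ at its last position by comparison with all $2^m$ local prefix sums, together with the monochromaticity bit $v_p=w_p$ and the per-part local counts $n_y^{(b,p)}$ via clipping, at overall cost $\mathcal{O}(nm/2^m)$. Then I invoke the partial AKS sort of Corollary~\ref{col:aks_partial_circuit} on the $2^{6m}$ parts, keyed by the monochromaticity bit ($k=1$) with each part's local counts, value $v_p$, and original index ($\mathcal{O}(m\,2^m)$ bits) as payload; this moves the at most $2^m$ non-monochromatic parts to the front in size $\mathcal{O}(m^2\,2^{7m})$ per block and depth $\mathcal{O}(m)$. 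Each of the first $2^m$ slots is expanded to $m\cdot 2^{2m}$-bit content via the naive decompressor of Lemma~\ref{lem:decompress} at total cost $\mathcal{O}(2^{4m}m)$ per block; the remaining slots fan their value $v_p$ out to $2^{2m}$ copies. A second partial AKS sort by the $6m$-bit original index, carrying the now $m\cdot 2^{2m}$-bit payload, places each part back into its sorted slot at size $\mathcal{O}(m^2\,2^{8m})$ per block and depth $\mathcal{O}(m\log m)$; this dominates and yields total size $\mathcal{O}(nm^2)$.

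The main obstacle is computing the local block counts efficiently: the naive comparator-per-pair approach costs $\Theta(nm\log n/2^{7m})$, which can exceed $nm^2$ when $n$ is enormous and $m$ is small. The step-function / thermometer reformulation via Lemma~\ref{lem:ones} is what circumvents the stray $\log n$ factor. Combining all phases gives the claimed size $\mathcal{O}(nm^2)$ and depth $\mathcal{O}(m\log m+\log\log n)$.
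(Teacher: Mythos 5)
Your plan is correct and achieves the claimed $\mathcal{O}(nm^2)$ size and $\mathcal{O}(m\log m + \log\log n)$ depth, but it takes a genuinely different route from the paper's proof. The paper partitions the \emph{output} into a small number of large blocks --- $2^{8m}$ blocks of size $k=n/2^{8m}$ each --- and works entirely globally: from the prefix sums $p_x$ it derives, for each value $x$, the count $q_x$ of fully-monochromatic blocks of that value and the count $r_x$ of occurrences of $x$ in the (at most $2^{m+1}$) non-monochromatic blocks; it then invokes the na\"{\i}ve decompressor once on the $q$'s (to produce the $2^{8m}$ block \emph{identities}, each then fanned out $k$ times) and once on the $r$'s (to produce $2n/2^{7m}$ integers' worth of non-monochromatic content plus zero padding), merges the two by bit-wise OR, and finally performs a single application of $\cktAKSPartial{2^{8m}}{(m+1)+km}{m+1}$ to permute the $2^{8m}$ blocks into sorted order, keying each block by its first value together with a monochromaticity bit. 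You instead flip the block sizes --- $n/2^{8m}$ blocks of size $2^{8m}$, each subdivided into $2^{6m}$ parts of size $2^{2m}$, mirroring the shape of the counting circuit of Lemma~\ref{lemma:fast_count} --- and do everything block-locally, which forces you to first propagate the global prefix sums down to per-block (and then per-part) local counts. Your thermometer trick with Lemma~\ref{lem:ones} to distribute $s_y$ across blocks without a per-block $\Theta(\log n)$-bit comparison is a nice and necessary device in this regime, and the two AKS partial sorts per block (one to segregate the at most $2^m-1$ non-monochromatic parts, one keyed on the $6m$-bit original index to put parts back) correctly dominate at $\mathcal{O}(m^2 2^{8m})$ per block, summing to $\mathcal{O}(nm^2)$. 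What each approach buys: the paper's global variant needs only one sorting call and two decompressor calls total and never materializes per-block count vectors, so it is shorter and avoids the thermometer machinery entirely; your local variant is more mechanical because it reuses the compression blueprint verbatim, at the price of managing $n/2^{2m}$ parts and carrying $\Theta(m2^m)$-bit count payloads through the first sort. Both are valid; you may find it instructive to compare your $r'_{0^m}$-free block-local merge with the paper's padding adjustment $r'_{0^m} = 2n/2^{7m} - \sum_{x\neq 0^m} r_x$, which is what makes the global OR-merge align on block boundaries.
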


The construction of the decompression circuit mirrors the counting circuit albeit it is somewhat simpler with a different choice of parameters.
We separately decompress monochromatic blocks (by decompressing just a single string from each block and then creating the right number of copies) and the strings from non-monochromatic blocks (as there are not many of those).
We then use partial sorting to rearrange the blocks in the proper order to construct a sorted sequence.

\begin{proof}
	For the sake of simplicity let us assume that $n$ is a power of two and let us set $k = n / 2^{8m}$. (Thus $k$ is an integer.) We will think  of the output as partitioned into $2^{8m}$ blocks of size $k$.
	As in the proof of Lemma~\ref{lem:decompress} we compute the prefix sums
	\begin{align*}
		p_{x} &= \sum_{y \in \left\{ 0,1 \right\}^{m} \colon y < x} n_{y} & \text{for each $x \in \left\{ 0,1 \right\}^{m}$}
	\end{align*}
	and we set $p_{2^{m}} = n$. (Here, we identify $m$-bit strings $x$ and $y$ with integers they represent.)
	We can compute each $p_x$ using the circuit \cktSum{2^{m}}{1+\log(n)}, thus computing all of them using a circuit of size $\mathcal{O}(\log(n)2^{2m}) \leq \mathcal{O}(n)$ (by the assumption $m \leq \log(n) / 11$) and depth $\mathcal{O}(m + \log \log(n))$.
	Thus the string $x \in \left\{ 0,1 \right\}^{m}$ should appear at output positions $[p_{x} + 1, p_{x + 1}]$.
	For any $x \in \left\{ 0,1 \right\}^{m}$ we set:
	\begin{align*}
		r_{x} &= \left( \left( k - \left( p_x \mod k \right) \right) \mod k \right) + \left( p_{x+1} \mod k \right) \\
		q_{x} &= \frac{n_x - r_x}{k}
	\end{align*}
	The meaning is that if we partition the output into blocks of $k$ consecutive numbers, then for any $x \in \left\{ 0,1 \right\}^{m}$ the number $r_x$ tells the number of times the string $x$ appears in non-monochromatic blocks. (These occurrences are located in at most two non-monochromatic blocks.)
	The number $q_{x}$ tells us in how many monochromatic blocks the string $x \in \left\{ 0,1 \right\}^{m}$ appears. Observe that $q_x$ is an integer. Since $n$ is a power of two, so is $k$, furthermore, $k$ is fixed for given $n$ and $m$, and thus computing mod $k$ and division by $k$ corresponds to selecting appropriate bits from the binary representation of numbers. All numbers $p_x$, $q_x$ and $r_x$ are integers represented by $1 + \log(n)$ bits.
Hence, each $q_x$ and $r_x$ can be computed from $n_x$ and $p_x$ by one circuit \cktPlus{1 + \log(n)} and two \cktDifference{1 + \log(n)}. The circuit computing values $q_x$ and $r_x$ for all $x$ has total size $\mathcal{O}(2^m \log(n))$ and depth $\mathcal{O}(\log \log(n))$.

	The following holds:
	\begin{align*}
		n_x &= k q_x + r_x \\
		\sum_{x \in \left\{ 0,1 \right\}^{m}} q_{x} &= \sum_{x \in \left\{ 0,1 \right\}^{m}}  \frac{n_x - r_x}{k} \leq n / k = 2^{8m} \\
		\sum_{x \in \left\{ 0,1 \right\}^{m}} r_{x} &\leq 2k 2^{m} = 2n / 2^{7m} \\
	\end{align*}


	We use circuit $\cktDecompress{2^{8m}}{m}(q_{0^m}, q_{0^{m-1}1}, \ldots, q_{1^{m}})$ from Lemma~\ref{lem:decompress} of size $\mathcal{O}\left( m 2^{9m} \right)$ and depth $\mathcal{O}\left( m \right)$ to decompress monochromatic blocks.
	We then just copy each resulting number $k$ times to create sorted monochromatic blocks.
	Last $2^{8m}-\sum_{x \in \left\{ 0,1 \right\}^{m}} q_{x}$ blocks contain zero padding corresponding to the numbers in non-monochromatic blocks. They will be merged with the non-monochromatic blocks obtained next.

	In order to properly match the non-monochromatic blocks to the padded zeroes we adjust the count $r_{0^m}$:
	\begin{align*}
		r'_{0^m} &= \left( 2n / 2^{7m} \right) - \sum_{x \in \left\{ 0,1 \right\}^m \colon x \neq 0^{m}} r_x
	\end{align*}
	using circuit \cktSum{2^{m}}{1+\log(n)} and \cktDifference{1 + \log(n)} of size $\mathcal{O}(n)$ and depth  $\mathcal{O}(m + \log \log(n))$.
	We use the circuit $\cktDecompress{2n / 2^{7m}}{m}(r'_{0^m}, r_{0^{m-1}1}, \ldots, r_{1^{m}})$ from Lemma~\ref{lem:decompress} to decompress the non-monochromatic blocks. The circuit is of size $\mathcal{O}\left( \left( 2n/2^{7m} \right) m 2^{m} + 2^{2m} \log\left( 2n / 2^{7m} \right) \right) \leq \mathcal{O}\left( nm / 2^{6m} \right)$ and of depth $\mathcal{O}(m + \log(\log(n)))$. (Here, we used our assumption $m \leq \log(n) / 11$, to bound  $n \geq 2^{11m}$ and $2^{2m} \leq n^{3/4} / 2^{6m}$.)

	Finally, we compute the bit-wise OR of the last $2^{m+1}$ blocks of the output from the previous step (monochromatic decompression) with the current output (non-monochromatic decompression). This way we get a sequence of $n$ numbers partitioned into blocks where each block corresponds to one of the blocks in the desired output. However, we still need to rearrange the blocks in the proper order. We will use partial sorting of the whole blocks to do that.

        For a given block let $x$ be the first number in that block. We prefix the block by a number $2x$ (represented by $m+1$ bits) if the block is monochromatic or the number $2x+1$ if the block is non-monochromatic. To determine whether the block is monochromatic we compare for equality the first and last number inside the block. We do this for each block. Thus each block of $k$ numbers is prefixed by an $m+1$ bit number.
Computing these prefixes requires a circuit of total size $\mathcal{O}(2^{8m} m) = O(n)$ and depth $\mathcal{O}(\log(m))$.
	We then use the \cktAKSPartial{2^{8m}}{(m+1) + km}{m+1} circuit of size $\mathcal{O}(nm^2)$ and depth $\mathcal{O}(m \log(m))$ to sort the blocks.	Finally, we ignore the $m+1$ bit prefixes of each block to get the desired output.
\end{proof}

\begin{proof}[Proof of Theorem~\ref{thm:main}]
	This is just a combination of Lemma~\ref{lemma:fast_count} with Lemma~\ref{lemma:fast_decompress}.
\end{proof}

Observe that the proofs of Lemma~\ref{lemma:fast_count} and Lemma~\ref{lemma:fast_decompress} do not depend on using specifically the AKS sorting.
In particular for the case of Lemma~\ref{lemma:fast_count} if there is a circuit that sorts input numbers that is linear in the number of input bits then there is a linear size circuit that counts these numbers.

\section{Partial Sorting by the First $k$ Bits in Poly-logarithmic Depth}
\label{sec:sorting_with_payloads}

Here we design a family of boolean circuits that partially sorts by the first $k$ bits out of $m$ bits which is asymptotically smaller than \cktAKSPartial{n}{m}{k}. We will need super-concentrators for our construction.

A directed acyclic graph $G = (V, E, A, B)$, where $V$ is the set of vertices, $E$ is the set of directed edges, and $A$ and $B$ are disjoint subsets
of vertices of the same size, is a \emph{super-concentrator} if the following hold:
The vertices in $A$ (\emph{inputs}) have in-degree zero, vertices in $B$ (\emph{outputs}) have out-degree zero, and for any $S \subseteq A$ and for any $T \subseteq B \colon |S| = |T|$ there is a set of pairwise vertex disjoint paths connecting each vertex from $S$ to some vertex in $T$.

We parametrize the super-concentrator by the number of input vertices $n$, and we measure its size by the number of edges. We want the graph to have as few edges as possible. The depth of the super-concentrator is the number of edges on the longest directed path.

Pippenger~\cite{pippenger1996self} shows a construction of super-concentrators of linear size and logarithmic depth. He constructs a family of super-concentrators $S_n$ for $n$ being the number of inputs, where the in-degree and out-degree of each vertex is bounded by some universal constant, the number of edges is linear in $n$, and the depth is $\mathcal{O}(\log(n))$.
Moreover there are finite automatons which for any $S \subset A, T \subset B \colon |S| = |T|$ when put on the vertices of the super-concentrator find the set of vertex disjoint paths from $S$ to $T$ in $\mathcal{O}(\log(n))$ iterations, each taking $\mathcal{O}(\log(n))$ steps, for the total number of $\mathcal{O}(n)$ steps of the automatons. We describe this construction using the language of circuits. The circuit on input of characteristic vector of $S$ and $T$ computes the set of $|T|$ vertex disjoint paths connecting $S$ and $T$. The circuit outputs the characteristic vector of the set of edges participating in the paths.

\newcommand{\cktRouteSizeDepthTODO}[1]{\todo[inline,color=green!40]{\cktRoute{ {#1} } size $\mathcal{O}\left( \left( {#1} \right) \log\left( {#1} \right) \right)$, depth $\mathcal{O}\left( \log\left( {#1} \right)^2 \right)$}}
\begin{theorem}[Pippenger~\cite{pippenger1996self}]
	There is a family of super-concentrators $S_n$ as described above and boolean circuits $\cktRoute{n} \fromto{2n}{|S_n|}$ of size $\mathcal{O}(n \log(n))$ and depth $\mathcal{O}(\log^2(n))$
that on input characteristic vector of any set $T \subseteq [n]$ and characteristic vector of any $S \subseteq [n]$ where $|T| = |S|$, outputs the characteristic vector of edges that form $|T|$ vertex disjoint paths between $S$ and $T$.
	\label{thm:pippenger_routing}
\end{theorem}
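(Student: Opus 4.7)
The plan is to treat the stated result as essentially a circuit simulation of Pippenger's self-routing automaton, so the construction is not conceptually new but requires careful bookkeeping. First, I would recall Pippenger's recursive construction of $S_n$: one builds $S_n$ from two copies of $S_{n/2}$ together with two bipartite graphs (an input-side and an output-side gadget) coming from explicit bounded-degree expanders. The key property is that the bipartite layers, together with the recursive core, are \emph{self-routing}: for any matching demand $(S,T)$ with $|S|=|T|$ one can place a constant-state finite automaton at every vertex so that, by exchanging messages only with its $O(1)$ neighbors, each automaton decides which of its incident edges participate in the disjoint paths.

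Next, I would cite/verify the key quantitative claim about these automata: at each recursion level, the automata converge to the correct routing in $O(\log n)$ synchronous rounds of local updates, and there are $O(\log n)$ levels of recursion in total. Inside any single level, the work is dominated by discovering an augmenting/matching structure on a bipartite expander, and the iterated expansion property guarantees geometric shrinkage of the ``unsatisfied'' inputs, yielding the $O(\log n)$ round bound.

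I would then compile the automaton into a Boolean circuit by the standard unrolling trick: for each vertex $v$ and each time step $t$, instantiate a constant-size, constant-depth gadget that computes the next state of $v$'s automaton from its previous state and the previous states of its $O(1)$ neighbors. The input wires feed the initial states from the characteristic vectors of $S$ and $T$; the output wires read off, from the final states, the indicator bits of edges that are used in the routing. Because each gadget is $O(1)$ in size and depth, one round of the routing across all $n$ vertices costs $O(n)$ gates and $O(1)$ depth; $O(\log n)$ rounds per recursion level and $O(\log n)$ levels give total size $O(n \log n)$ and depth $O(\log^2 n)$, matching the claim.

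The main obstacle is the packaging rather than any fundamentally hard step: one must verify that Pippenger's automaton truly has $O(1)$ states and that its transition function can be realized by a constant-size Boolean gadget, and then one must check that the levels of recursion can be laid out so that the depth and size telescope correctly (in particular, that the size contribution $O(n_\ell \log n_\ell)$ at the level with $n_\ell = n/2^\ell$ sums to $O(n\log n)$ rather than blowing up). Given that, the stated size and depth bounds follow directly from the unrolling.
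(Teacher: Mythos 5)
The paper does not actually prove Theorem~\ref{thm:pippenger_routing}: it is cited as a black box from Pippenger's 1996 paper on self-routing super-concentrators, and the only ``proof'' in the paper is the surrounding prose restating Pippenger's quantitative claims (linear-size, logarithmic-depth super-concentrators together with constant-state automata that find the disjoint paths in $\mathcal{O}(\log n)$ iterations each of $\mathcal{O}(\log n)$ steps). So there is no argument in the paper to compare against; your proposal is an attempt to reconstruct Pippenger's construction, not the paper's.

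As a reconstruction, the core idea is right: compile the constant-state local automaton into a circuit by unrolling it in time, one constant-size gadget per (vertex, round) pair, with the characteristic vectors of $S$ and $T$ feeding the initial states and the final states read off as edge indicators. Two details need fixing, though. First, Pippenger's $S_n$ is \emph{not} built from ``two copies of $S_{n/2}$'': with two recursive calls of half the size the recurrence for the edge count would solve to $\Theta(n\log n)$, contradicting linear size. The recursion must geometrically shrink the total work, i.e.\ a single recursive sub-super-concentrator on a constant fraction of the inputs (plus bipartite concentrator layers and a direct matching). Second, the sentence asserting ``$\mathcal{O}(\log n)$ rounds per recursion level and $\mathcal{O}(\log n)$ levels give total size $\mathcal{O}(n\log n)$'' does not follow from the preceding ``one round across all $n$ vertices costs $\mathcal{O}(n)$ gates'': multiplying those three factors gives $\mathcal{O}(n\log^2 n)$. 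The bound $\mathcal{O}(n\log n)$ only comes out once you use that level $\ell$ has only $\Theta(n/2^{\ell})$ active vertices, so the gate count is $\sum_{\ell}(n/2^{\ell})\log(n/2^{\ell})=\mathcal{O}(n\log n)$, while the depth is additive across levels, $\sum_{\ell}\mathcal{O}(\log(n/2^{\ell}))=\mathcal{O}(\log^{2}n)$. You actually flag exactly this telescoping in your closing parenthetical, so the issue is internal inconsistency more than a missing idea, but the earlier arithmetic as written is wrong and depends on the recursion having a single geometrically shrinking branch.
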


By routing $m$ bits along each path in the super-concentrator we can use the above circuit to build a circuit that partially sorts $m$-bit integers by their most significant bit.

\begin{corollary}
	There is a family of boolean circuits $\cktPSort{n}{m}{1} \fromto{n \cdot m}{n \cdot m}$
	that on input $x_1, x_2, \ldots, x_n \in \left\{ 0,1 \right\}^{m}$ partially sort these numbers according to their first most significant bit.
	The size of the circuit \cktPSort{n}{m}{1} is $\mathcal{O}(n m + n \log(n))$ and depth $\mathcal{O}(\log^2(n))$.
	\label{col:pippenger_sort}
\end{corollary}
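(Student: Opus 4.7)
The plan is to lift Pippenger's routing circuit from Theorem~\ref{thm:pippenger_routing} from single-bit indicators to $m$-bit labels by running it inside two independent copies of the super-concentrator $S_n$: one copy moves the inputs whose most significant bit is $0$ to the leftmost output positions, and the other copy moves the inputs whose most significant bit is $1$ to the remaining positions.

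First, extract the MSB bits $v_1,\ldots,v_n$ of the inputs at no extra cost. Let $S_0=\{i:v_i=0\}$, whose characteristic vector is $(\bar v_1,\ldots,\bar v_n)$, and let $k=|S_0|$, computed by $\cktSum{n}{1}$ in size $\mathcal{O}(n)$ and depth $\mathcal{O}(\log n)$. The target set $T_0=\{1,\ldots,k\}$ has characteristic vector $\cktOnes{\lceil \log n\rceil}(k)$ truncated to $n$ bits, contributing size $\mathcal{O}(n)$ and depth $\mathcal{O}(\log\log n)$. Feeding the characteristic vectors of $S_0$ and $T_0$ into $\cktRoute{n}$ yields, in size $\mathcal{O}(n\log n)$ and depth $\mathcal{O}(\log^2 n)$, the characteristic vector of $k$ vertex-disjoint paths from $S_0$ to $T_0$ inside $S_n$.

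Next, I route the actual $m$-bit values through $S_n$ using this edge selection. At each non-input vertex $v$ of $S_n$, whose in- and out-degrees are bounded by a universal constant $d$, I compute the $m$-bit ``content'' of $v$ as the bitwise OR, over the $d$ incoming edges $e$ of $v$, of the $m$-bit label on $e$ ANDed with the indicator bit of $e$; then I gate this content onto each of $v$'s outgoing edges by ANDing with the outgoing edge's indicator bit. Input vertices of $S_n$ simply carry their $x_i$ ANDed in the same way with the indicator bits of their outgoing edges. Because $d$ is a fixed constant and $S_n$ has $\mathcal{O}(n)$ edges and depth $\mathcal{O}(\log n)$, this value-propagation sub-circuit has size $\mathcal{O}(nm)$ and depth $\mathcal{O}(\log n)$. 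Applying the same construction to $(S_1,T_1)$, where $S_1=\{i:v_i=1\}$ and $T_1$'s characteristic vector is the bitwise complement of $T_0$'s, moves the MSB-one values to the last $n-k$ output positions; the two routed outputs are OR-ed position-wise, which is correct because $T_0\cup T_1=[n]$ is a disjoint union.

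Summing all costs gives total size $\mathcal{O}(nm+n\log n)$ and depth $\mathcal{O}(\log^2 n)$, dominated by the two invocations of $\cktRoute{n}$. The only delicate point is the per-vertex multiplexer in the value-propagation step: one must verify that the edge-indicator bits returned by $\cktRoute{n}$ activate at most one incoming and at most one outgoing edge at every vertex, which is exactly the vertex-disjointness guarantee of the super-concentrator routing. Under this guarantee the OR-of-ANDs at each vertex equals the unique label on the unique active incoming edge (or $0$ if the vertex lies on no path), so the value arriving at any $t\in T_b$ is exactly the $m$-bit input at the endpoint of its path in $S_b$, which is all that is required for partial sorting by one bit.
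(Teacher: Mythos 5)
Your proposal is correct and matches the paper's proof essentially step for step: both run $\cktRoute{n}$ twice (once per MSB value), both turn each vertex of $S_n$ into a bitwise AND/OR routing gadget that propagates $m$-bit labels along the selected edge set, both compute the target characteristic vector via $\cktSum{n}{1}$ followed by $\cktOnes{}$, and both merge the two routed arrays by a position-wise OR. The only cosmetic differences are that the paper gates on incoming edges only (gating the outgoing edges too, as you do, is harmless but redundant) and phrases the two passes in terms of routing the ones and then the zeros rather than vice versa.
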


\begin{proof}
We give a sketch of the proof. First, we will use the graph $S_n$ to get all inputs starting with one to the proper place. Then, using the same construction we will move all inputs starting by $0$ to the proper place. We transform the graph $S_n$ into a circuit by replacing each vertex of in-degree $d$ by a \emph{routing gadget} (circuit)
which takes $d$ $m$-bit inputs together with $d$ control bits, one bit for each of the $m$-bit inputs, and outputs the bit-wise OR of inputs for
which their control bit is set to 1. Such a routing gadget of size $\mathcal{O}(dm)$ and depth $\mathcal{O}(\log(d))$ can be easily constructed. If $(u,v)$ is the $j$-th
incoming edge of $v$ in $S_n$, we connect the $j$-th block of $m$ input bits of the routing gadget corresponding to $v$ to the output of the routing gadget of $u$. The routing gadgets of input vertices of $S_n$ are connected directly to the appropriate inputs of the sorting circuit.
The routing gadget will be used with at most single control bit set to one, thus it will route the corresponding input.

It remains to calculate paths that will route the integers starting with 1 in the above circuit in the desired way. For that, we calculate the sum $s$ of the most significant bits by which we are sorting using \cktSum{n}{1} from Lemma~\ref{lem:sum_n_numbers}, we expand it back using $\cktOnes{\lceil \log(n) \rceil + 1}(s)$, and reverse it to get the characteristic vector of a set $T$, where we want to route to. Together with the most significant bits of each input integer (which form the characteristic vector of $S$ from which we route) we feed this as an input to $\cktRoute{n}$. The output bits of $\cktRoute{n}$ are connected to the appropriate control bits of our routing gadgets. The sorted output will be obtained as the output of the $n$ routing gadgets corresponding to the output vertices of~$S_n$.

The size of the $\cktRoute{n}$ is $\mathcal{O}(n \log(n))$ and the total size of the circuits implementing the routing gadgets is $\mathcal{O}(mn)$. These two terms dominate the overall size of the circuit. The depth of the circuit is dominated by the depth of the $\cktRoute{n}$.
\end{proof}

We can use the above circuit in an iterative fashion to build a smaller circuit for the same primitive.

\begin{lemma}
	There is a family of boolean circuits $\cktRSort{n}{m}{1} \fromto{n \cdot m}{n \cdot m}$
	that on input $x_1, x_2, \ldots, x_n \in \left\{ 0,1 \right\}^{m}$ partially sort these numbers according to their first most significant bit.
	The size of the circuit \cktRSort{n}{m}{1} is $\mathcal{O}(n m (1+\log^*(n)-\log^*(m)))$ and its depth is $\mathcal{O}(\log^2(n))$.
	\label{lem:route}
\end{lemma}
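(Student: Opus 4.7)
The plan is to bootstrap from \cktPSort{B}{m}{1} of Corollary~\ref{col:pippenger_sort} (size $\mathcal{O}(Bm + B\log B)$) by applying it only to blocks small enough that $B\log B = \mathcal{O}(Bm)$, and then iteratively \emph{lift} one-bit sorting to larger and larger blocks, paying only $\mathcal{O}(nm)$ per iteration while running for $\mathcal{O}(1+\log^*(n)-\log^*(m))$ iterations.

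I maintain the following invariant at level $i$: the $n$ inputs are partitioned into $n/B_i$ blocks of size $B_i$, each already sorted by the most significant bit, and each further sub-partitioned into \emph{parts} of a fixed size $p_i$ in such a way that at most one part per block is non-monochromatic (contains both $0$-prefixed and $1$-prefixed items). The base case (level $0$) picks $B_0$ to be a suitable power of two with $B_0 = 2^{\Theta(m)}$ and $p_0 = m$, then applies \cktPSort{B_0}{m}{1} to each of the $n/B_0$ blocks. Because $\log B_0 = \mathcal{O}(m)$, the total cost is $\mathcal{O}(n/B_0 \cdot (B_0 m + B_0 \log B_0)) = \mathcal{O}(nm)$.

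The step from level $i$ to level $i+1$ merges $k_i = B_{i+1}/B_i$ consecutive blocks. The merged block has $B_{i+1}/p_i$ parts, of which at most $k_i$ are non-monochromatic by the invariant. I attach a two-bit tag to each part ($00$ for all-$0$, $01$ for mixed, $10$ for all-$1$), computable in constant depth by comparing the first and last item inside the part. A constant number of applications of \cktPSort on these tagged parts, each viewed as a string of length $p_i m + \mathcal{O}(1)$, suffices to move all mixed parts into a contiguous middle segment; a further call to \cktPSort{k_i p_i}{m}{1} on the integers of that middle sorts them by the most significant bit, leaving at most one mixed part of size $p_i$ inside the merged block. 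Setting $p_{i+1} = p_i$ preserves the invariant. With the choice $p_i = B_{i-1}$ and $B_{i+1} = 2^{p_i m}$, the part-sort cost summed over all $n/B_{i+1}$ merged blocks is $\mathcal{O}(nm + (n/p_i)\log B_{i+1}) = \mathcal{O}(nm)$ using $\log B_{i+1} = p_i m$, and the middle-sort cost is $\mathcal{O}((n p_i / B_i)(m + \log B_i)) = \mathcal{O}(nm)$ using the tower growth $p_i \log B_i \le B_i m$. The recurrence $B_{i+1} = 2^{B_{i-1} m}$ starting from $B_0 = 2^{\Theta(m)}$ reaches $B_L \ge n$ after $L = \mathcal{O}(\log^*(n) - \log^*(m))$ steps, so the total size is $\mathcal{O}(nm(1+\log^*(n)-\log^*(m)))$ as required.

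Each level has depth dominated by the $\mathcal{O}(\log^2 B_{i+1}) \le \mathcal{O}(\log^2 n)$ of the \cktPSort calls, and tower growth of $B_i$ makes these depths decrease super-geometrically as $i$ recedes from $L$, so they sum to $\mathcal{O}(\log^2 n)$. The main obstacle is the joint parameter choice: the budget $\log B_{i+1} \le p_i m$ (needed to charge the part-sort to $\mathcal{O}(nm)$) limits how aggressively $B_i$ can grow, while the middle-sort budget $p_i \log B_i \le B_i m$ constrains $p_i$ to stay small relative to $B_i/\log B_i$, yet one still wants $B_i$ to grow tower-fast so that only $\mathcal{O}(\log^*(n)-\log^*(m))$ levels are needed; tying $p_i = B_{i-1}$ closes this loop with constant amortized slack.
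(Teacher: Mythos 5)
Your high-level plan is the same as the paper's: bootstrap from Pippenger's one-bit sort on blocks of size $2^{\Theta(m)}$ (so the $B\log B$ term is absorbed by $Bm$), then tower-grow the block size, paying $\mathcal{O}(nm)$ per level over $\mathcal{O}(\log^*n-\log^*m)$ levels. The invariant, the tagging of parts by monochromatic/mixed status, and the amortization are all essentially the paper's argument.

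However, there is a genuine gap in the merge step. After sorting the tagged parts into the order all-$0$, mixed, all-$1$, the block of mixed parts does end up contiguous, but its \emph{position} inside the merged block depends on the input (it sits at the $0/1$ boundary, wherever that falls). A circuit sub-component $\cktPSort{k_ip_i}{m}{1}$ must be wired to a \emph{fixed} set of $k_ip_i$ wires at construction time, so you cannot "call it on that middle." The paper avoids exactly this by sorting the mixed parts to a fixed end of the block (Step~1), applying the inner sort to the last fixed-size group of parts (Step~2), swapping the single remaining mixed part to the very last slot, and then sorting the monochromatic parts by color (Step~3). The price is a weaker invariant: the block is \emph{not} fully sorted by its most significant bit, there can be up to two mixed parts after repartitioning, so the paper only maintains that the \emph{fraction} of mixed parts in a block is tiny. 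Your stronger invariant ("each block is already sorted by the most significant bit, hence only one mixed part") is exactly what would require getting the single mixed part back to the dynamic $0/1$ boundary, which needs some additional routing idea that is not present in your proposal. Related to this, the statement "Setting $p_{i+1}=p_i$ preserves the invariant" contradicts the later choice $p_i=B_{i-1}$, which has $p$ growing; with $p$ truly constant the block size cannot tower-grow at all, so the growing choice is the one you need, and then you must actually argue (and you cannot with the current step) that repartitioning into the larger parts keeps the mixed-part count bounded. Finally, in the middle-sort cost the relevant logarithm is $\log(k_ip_i)\approx p_im$, not $\log B_i$; the $\mathcal{O}(nm)$ bound still goes through because $p_i^2\le B_i$ under your tower growth, but the formula as written understates the term.

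Concretely, to repair the proof you should follow the paper's Steps~1--3: sort tagged parts so mixed parts go to an \emph{end} of the block, sort the fixed-position last group of $\mathcal{O}(k_i)$ parts by the most significant bit, swap the one remaining mixed part to a fixed slot, sort the remaining parts by color, and relax the invariant to "a small fraction of parts per block is mixed." The cost and depth analysis you give then carries over essentially unchanged.
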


\begin{proof}
Assume $m\le \log(n)/11$ otherwise use Corollary~\ref{col:pippenger_sort}.
We will build the circuit iteratively using the circuit from Corollary~\ref{col:pippenger_sort} for blocks of various sizes. We will start with small blocks of items and we will iteratively sort larger and larger number of items organized into mostly monochromatic blocks. 
Without loss of generality we assume that $m$ is a power of two, and we will ignore the rounding issues.
We will have two parameters $m_i$ and $n_i=2^{3m_i}$, where $m_0=m$ and $m_{i+1}=2^{m_i}$ for $i\ge 0$. At iteration $i$, all the items will be partitioned into \emph{parts} of consecutive numbers, each part will be either \emph{monochromatic} containing all zeros, all ones, or it will be \emph{mixed}. (Here we refer to the most significant bits of the numbers in the part.) For each part we will maintain two indicator bits which of the three possibilities occurs: an indicator which is one if the block is mixed, and another \emph{color} indicator which specifies the highest order bit of the integers if the block is monochromatic. (For the latter we could use the first bit of the first integer in the part.)
At each iteration $i>0$, $m_i$ will denote the number of items in each part. $n_i/m_i$ consecutive parts form a \emph{block}, so each block contains $n_i$ items. The blocks partition the input. We will maintain an invariant that the fraction of mixed parts in each block is at most $2/m^3_i$.

At iteration $0$ we apply $\cktPSort{n_0}{m}{1}$ to consecutive blocks of $n_0$ input integers. Afterwards, the block is partitioned into parts of size $m_1$ and for each part we determine its status by comparing the most significant bits of the first and last integer in the part. It is clear that each block of size $n_0$ contains at most one mixed part. As the number of parts in the block is $m_1^3$, the fraction of mixed parts in each block is at most $2/m^3_1$, and this is also true for blocks of size $n_1$.

At iteration $i>0$, we divide the current sequence of parts of size $m_i$ into blocks containing $n_i/m_i$ parts, and we proceed in three steps:

\begin{description}

	\item[Step 1.]
		Sort the parts in each block using $\cktPSort{n_i/m_i}{2+m_i\cdot m}{1}$ according to the mixed indicator.
		Hence, all the mixed parts will move to the end of the block.
		There are at most $2n_i/m^3_i$ mixed parts in each block, the remaining parts must be monochromatic.

	\item[Step 2.] 
		In each block, sort all the $m$-bit integers in the last $2n_i/m^3_i$ parts according to their most significant bit using $\cktPSort{2n_i/m^2_i}{m}{1}$.
		This sorts together all the integers in the mixed parts (and perhaps few other parts).
		Repartition them into parts of $m_i$ consecutive numbers and determine their indicator bits.
		Only one of the parts should be mixed at this point.
		Swap it with the last part in the block.
		(We provide details of the swap later.)

	\item[Step 3.] 
		In each block, sort all the parts except for the last one according to their color indicator using $\cktPSort{(n_i/m_i)-1}{2+m_i\cdot m}{1}$.
		This moves all the parts of color 0 to the front.
		Repartition all the numbers in the block into parts of $m_{i+1}$ consecutive integers and determine their indicator bits, where the last part is marked as mixed.
		At most two of the new parts should be mixed at this point.
		Notice, that out of $m_{i+1}^3$ parts in each block, at most two are marked as mixed so the invariant applies.
		We can move to the next iteration.

\end{description}

We iterate the algorithm until $m_i \ge \log(n)/4$. Once $m_i\ge \log(n)/4$, the number of integers in mixed parts is at most $2n/m^2_i \leq O(n/\log^2(n))$, remaining items are in monochromatic parts. At this point we cannot form a block of size $n_i$, but we can still perform the same type of actions as in Steps 1-3: We can bring the monochromatic parts forward as in Step~1, sort the last $32n/\log^2(n)$ integers belonging to the mixed parts, move the remaining mixed part to the end, sort the monochromatic parts and swap the mixed part with the first monochromatic part of color 1.

To swap a single mixed part with the last part we can copy the mixed part into a buffer by AND-ing every part bit-wise with the indicator whether that is the mixed part, and OR-ing all the results together. This copies the mixed part into a buffer. In a similar fashion we can copy the last part into the now unused part by letting each part bit-wise copy to its place either its original content or the content of the last part, again conditioning on an appropriate indicator bit. Hence, the swap can be implemented by a circuit of size proportional to the total size of the parts and depth logarithmic in the number of parts.

Now we will bound the total size of the circuit we constructed. Step~1 requires $n/n_i$ circuits of size $\mathcal{O}(n_i m + n_i/m_i \log(n_i/m_i))=
\mathcal{O}(n_i m)$, as $\log(n_i)=O(m_i)$, and of depth at most $\mathcal{O}(\log^2(n_i))$. Step~2 requires $n/n_i$ sorting circuits of size $\mathcal{O}(m n_i/m^2_i + 2n_i/m^2_i \log(2n_i/m^2_i))=\mathcal{O}(n_i)$ and of depth at most $\mathcal{O}(\log^2(n_i))$, together with a circuit of total linear size $\mathcal{O}(n)$ to recalculate the parts and do the swaps.  The last step requires the same amount of circuitry as the first step.

Hence, each step requires circuits of total size $\mathcal{O}(n m)$.
The same goes for the initial sort at iteration 0, and the final sorts at the end.
As there are at most $\log^*(n)-\log^*(m)$ iterations, the resulting size is $\mathcal{O}(n m (\log^*(n)-\log^*(m)))$.
Each step requires a circuit of depth $\mathcal{O}(\log^2(n_i))$, recall that by our choice $n_i = 2^{4m_i}$, thus $\log(n_{i}) = 4m_i$.
Since $m_{i+1} = 2^{m_i}$ and for each $i$ we have $m_i \leq \log(n) / 4$, thus the total depth is dominated by the last iteration where we use a circuit of depth $\mathcal{O}(\log^2(n))$.
\end{proof}

\begin{proof}[Proof of Theorem~\ref{thm:sort_by_k_bits}]
We assume that $k\le \log(n)/11$ otherwise we can use Corollary~\ref{col:aks_partial_circuit} to sort the elements. Without loss of generality we assume $n$ is a power of two.
We think of the input as organized into an array. We extract the first $k$ bits (\emph{key}) from each input element and we sort the keys using
the circuit from Theorem~\ref{thm:main} of size $\mathcal{O}(nk^2)$ and depth  $\mathcal{O}(\log(n) + k \log(k))$.

We will build recursively a circuit that will sort the input array of $n$ elements according to the first $k$ bits when the input is augmented with the array of sorted keys. Now our goal is to split the input array into two equal sized parts $L$ and $R$ where all elements in $L$ are less or equal to elements in $R$ when comparing only the keys. 

To do that we take the \emph{median}, the $n/2$-th element among the keys, and we partition the array according to it. We split the input array into three arrays $L$, $M$, and $R$ of length $n$ with elements less than, equal to, and greater than the median, resp., and we mark the unused elements as \emph{dummy} using an extra bit associated to each element. We sort $L$ and $M$ so that all non-dummy elements are to the left and $R$ so that all non-dummy elements are to the right. We use three circuits $\cktRSort{n}{m+1}{1}$ to do that. Now, we flip the first half of elements in $M$, i.e., swap the $i$-th element with the element in position $(n/2)-i+1$, and we replace the dummy elements in the first half of $L$ by the corresponding elements in $M$. By one application of $\cktRSort{n}{m+1}{1}$ we move all the remaining non-dummy elements in $M$ to the left,
and we merge those elements with the second half of $R$. We discard the second and first half of $L$ and $R$, respectively. (They contain only dummy elements.) 

If the highest order bit of the median is set to $0$ then all the elements in $L$ have the highest order bit set to $0$, otherwise all the elements in $R$ have the highest order bit set to $1$. In either case we reduced the problem to one problem of sorting half of the elements according to $k-1$ bits and the other half according to $k$-bits. We recursively build a circuit to sort $\cktSort{n/2}{m}{k-1}$ and  $\cktSort{n/2}{m}{k}$
when the input is augmented with the sorted array of keys. We pass to each of the sorting sub-circuits the appropriate sub-problem and we re-route the results from them to form the final output.

Not counting the two sub-circuits  $\cktSort{n/2}{m}{k-1}$ and  $\cktSort{n/2}{m}{k}$, this step requires four copies of the circuit $\cktRSort{n}{m+1}{1}$ and additional $\mathcal{O}(nm)$ gates to do the moves and element comparison with the median. Denote the size of this part of the circuit by $L_{m}(n)=\mathcal{O}(n m (1+\log^*(n)-\log^*(m)))$.
The depth of the resulting circuit to perform all those operations is $\mathcal{O}(\log^2(n))$ as the move operations are done in parallel (again, not counting the depth of $\cktSort{n/2}{m}{k-1}$ and  $\cktSort{n/2}{m}{k}$).
	If we denote by $S_{m, k}(n)$ the size of the circuit \cktSort{n}{m}{k} we get the following recurrence:
	\begin{align*}
		S_{m, k}(1) &= \mathcal{O}(m) \\
		S_{m, 1}(n) &= \mathcal{O}(n m (1 + \log^*(n) - \log^*(m))) \\
		S_{m, k}(n) 
		&\leq L_{m}(n) + S_{m, k-1}\left(  \frac{n}{2} \right) + S_{m, k}\left(  \frac{n}{2}  \right)
	\end{align*}
	when we iterate the recurrence:
	\begin{align*}
		S_{m, k}(n) &= L_{m}(n) + S_{m, k-1}(n/2) + S_{m, k}(n/2) \\
		&= L_{m}(n) + S_{m, k-1}(n/2) + L_{m}(n/2) + S_{m, k-1}(n/4) + S_{m, k}(n/4) \\
		&= L_{m}(n) + S_{m, k-1}(n/2) + L_{m}(n/2) \\
		&{} \text{ \  \  \    }  + S_{m, k-1}(n/4) + L_{m}(n/4) + S_{m, k-1}(n/8) + S_{m, k}(n/8) \\
		&= \ldots \\
		&= \left( L_{m}(n) + L_{m}(n/2) + \ldots + L_{m}(1) \right) \\
		&{} \text{\ \ \ \ } + \left( S_{m, k-1}(n/2) + S_{m, k-1}(n/4) + \ldots + S_{m, k-1}(1) \right) + S_{m, k}(1) \\
		&\leq L_{m}(2 n) + S_{m, k-1}(n) + \mathcal{O}(m)
	\end{align*}
	which gives us
	\begin{align*}
		S_{m, k}(n) &= k L_{m}(2 n) + (k-1) S_{m, k}(1) + S_{m, 1}(n) \\
		&= k L_{m}(2 n) + \mathcal{O}(n m (1 + \log^*(n) - \log^*(m))) \\
		&= \mathcal{O}(k n m (1 + \log^*(n) - \log^*(m)))
	\end{align*}

	To bound the depth $D_{m, k}(n)$ we use the following recurrence:
	\begin{align*}
		D_{m, k}(1) &= \mathcal{O}(1) \\
		D_{m, k}(n) &\geq D_{m, k-1}(n) \\
		D_{m, 1}(n) &= \mathcal{O}(\log^2(n)) \\
		D_{m, k}(n) &= \mathcal{O}(\log^2(n)) + \max\left( D_{m, k}(n/2) + D_{m, k-1}(n/2) \right) \\
		&\leq \mathcal{O}(\log^2(n)) + D_{m, k}(n/2) \\
		&\leq \mathcal{O}(\log^3(n))
	\end{align*}
\end{proof}

\section{Conclusion}

We have provided improved sorting circuits.
Our technique used in the proof of Theorem~\ref{thm:main} can be viewed as information compression and decompression.
This technique might prove useful for other related problems. We list some open problems:

\begin{itemize}

	\item  Most of our circuits are uniform.
		The non-uniform part is due to the use of the AKS circuits and Pippenger's super-concentrators.
		Can one make uniform circuits of the same size?

	\item 
		Kospanov~\cite{kospanov1994scheme} shows that there is a family of sorting circuits with depth $\mathcal{O}(\log(n) + \log(m))$ and size $\mathcal{O}(m n^2)$ that sorts $n$ numbers each of $m$ bits.
		Is there a circuit family for sorting with circuits of depth $\mathcal{O}(\log(n) + \log(m))$ and size $\mathcal{O}(n m^2)$?
		In other words can we get rid of the $m \log(m)$ factor in the circuit depth from Theorem~\ref{thm:main} while keeping the $\mathcal{O}(nm^2)$ size?

	\item  Is it possible to partially sort $n$ numbers of $m$ bits each by their first bit using a circuit of size $\mathcal{O}(nm)$ and depth $\mathcal{O}(\log(n))$?

\end{itemize}

\paragraph{Acknowledgement:} 
The authors are grateful for insightful discussions with Mike Saks on sorting and to Veronika Slívová for her insights and comments regarding the first versions of this paper.
The authors thank Igor Sergeev for pointing us to the paper of Kospanov~\cite{kospanov1994scheme}.

\bibliography{bibliography}

\end{document}